\newcommand{\p}{\partial}
\newcommand{\bd}{\begin{definition}}                
\newcommand{\ed}{\end{definition}}                  
\newcommand{\bc}{\begin{corollary}}                 
\newcommand{\ec}{\end{corollary}}                   
\newcommand{\bl}{\begin{lemma}}                     
\newcommand{\el}{\end{lemma}}                       
\newcommand{\bp}{\begin{proposition}}            
\newcommand{\ep}{\end{proposition}}                
\newcommand{\bere}{\begin{remark}}                  
\newcommand{\ere}{\end{remark}}                     
\newcommand{\bt}{\begin{theorem}}
\newcommand{\et}{\end{theorem}}
\newcommand{\be}{\begin{equation}}
\newcommand{\ee}{\end{equation}}
\newcommand{\bit}{\begin{itemize}}
\newcommand{\eit}{\end{itemize}}
\newtheorem{theorem}{Theorem}[section]
\newtheorem{corollary}[theorem]{Corollary}
\newtheorem{lemma}[theorem]{Lemma}
\newtheorem{proposition}[theorem]{Proposition}
\theoremstyle{definition}
\newtheorem{definition}[theorem]{Definition}
\theoremstyle{remark}
\newtheorem{remark}[theorem]{Remark}
\newtheorem{example}[theorem]{Example}
\begin{document}

\title{Causal simplicity and (maximal) null pseudoconvexity }

\author[1]{J. Hedicke\thanks{E-mail: jakob.hedicke@ruhr-uni-bochum.de}}
\author[2]{E. Minguzzi\thanks{E-mail: ettore.minguzzi@unifi.it}}
\author[3]{B. Schinnerl\thanks{E-mail: benedict.schinnerl@univie.ac.at}}
\author[3]{R. Steinbauer\thanks{E-mail: roland.steinbauer@univie.ac.at}}
\author[1]{S. Suhr\thanks{E-mail: Stefan.Suhr@ruhr-uni-bochum.de}}

\affil[1]{Fakult\"at f\"ur Mathematik, Ruhr-Universit\"at Bochum, Universit\"atstra\ss e 150, D-44801 Bochum, Germany.}
\affil[2]{Dipartimento di Matematica e Informatica ``U. Dini'', Universit\`a
degli Studi di Firenze, Via S. Marta 3,  I-50139 Firenze, Italy.}
\affil[3]{Fakult\"at f\"ur Mathematik, Universit\"at Wien, Oskar-Morgenstern-Platz 1, A-1090 Wien, Austria.}

\date{}

\maketitle

\begin{abstract}
\noindent
We consider pseudoconvexity properties in Lorentzian and Riemannian manifolds and
their
relationship in static spacetimes. We provide an example of a
causally continuous
and maximal null pseudoconvex spacetime that fails to be causally simple. Its Riemannian factor provides an analogous example of a manifold that is minimally pseudoconvex, but fails to be convex.
\end{abstract}

\section{Introduction}
A pseudo-Riemannian manifold is said to have a pseudoconvex class of geodesics, if for each compact subset $K$ there is a larger compact subset $K'$, such that any geodesic segment of the
class with endpoints in $K$ lies entirely in $K'$. Causal pseudoconvexity is a kind of an ``internal completeness'' assumption for spacetimes akin to, but strictly weaker than global hyperbolicity. Similar to the latter, it is a causality notion with strong ties to the theory of PDE. Pseudoconvexity of bicharacteristics characterizes the existence of a parametrix for pseudodifferential operators of real principal type, cf.\ \cite[Section 6]{DH:72}, \cite{BP:83}.

John Beem and his coauthors have used pseudoconvexity in several contexts in causality theory. In \cite{BE:87} causal pseudoconvexity together with causal geodesic non-imprisonment (inextendible causal geodesics leave every compact set) appear as sufficient conditions for the stability of causal geodesic completeness, cf.\ \cite[Th.\ 7.35]{BEE:96}. Geodesic non-imprisonment and pseudoconvexity of all geodesics together with the absence of conjugate points implies geodesic connectedness and serve as conditions for a pseudo-Riemannian version of the Hadamard-Cartan theorem \cite{BP:89}, \cite[Ch.\ 11]{BEE:96}. For a recent generalization see \cite{CF:21}.

Here we are especially interested in the relation between pseudoconvex and causally simple spacetimes, i.e.\ causal spacetimes with closed causal relation. Indeed, every causally simple spacetime is maximal null pseudoconvex by \cite[Th.\ 1]{beem92}, for which we provide a simplified proof in Theorem \ref{thm:beem92}, below. Concerning the reverse implication it has been conjectured in \cite[Sec.\ 1]{BILL:17} that strongly causal (i.e.\ there are no ``almost closed'' causal curves) and null pseudoconvex spacetimes are causally simple. Finally, in \cite[Th.\ 2]{vatandoost19} it was claimed that for strongly causal spacetimes maximal null pseudoconvexity and causal simplicity are equivalent. Here we provide a counterexample to this statement, i.e.\ we establish that
\begin{center}
 causal continuity
 and maximal null pseudoconvexity \\ do not imply causal simplicity.
\end{center}

Our counterexample is a static spacetime based on a corresponding Riemannian counterexample, that enjoys a certain limiting property for minimizing geodesics, but fails to be convex.
This counterexample adds to the list of recently found counterexamples involving the notion of causal simplicity \cite{hedicke19,minguzzi20a}.

In Section  \ref{sec:caus} we provide some general results on pseudoconvexity and in Section \ref{sec:stat} we specialize to static spacetimes. This will allow us
 to ``lift'' the corresponding properties of the Riemannian counterexample to the spacetime level in Section \ref{sec:r-counter}.
\medskip

In the remainder of this introduction we fix some notations and conventions.
All manifolds are assumed to be smooth, connected, Hausdorff, second countable, of arbitrary dimension $n\geq2$, and without boundary. By a {\em minimizing geodesic} in a Riemannian manifold $(\Sigma, h)$ we mean a geodesic whose length equals the distance $d^h(x,y)$ between its endpoints $x$ and $y$. We call $\Sigma$ {\em convex} if any pair of its points can be connected via a minimizing geodesic.

A spacetime $(M,g)$ is a time oriented Lorentzian manifold, where we use the signature $(-,+,\dots,+)$. A causal geodesic in $M$ is called {\em maximizing} if its length equals the Lorentzian distance $d^g(p,q)$ between its endpoints $p$ and $q$. We denote the chronological and the causal relation by $I$ and $J$ respectively. A spacetime is called {\em causal} if there are no closed causal curves. If in addition the causal relation $J$ is closed, it is called {\em causally simple}.  A spacetime is {\em non-total imprisoning} if no inextendible causal curve is contained in a compact set. We shall only consider causal spacetimes, in which case the non-total imprisonment property is equivalent with the causal geodesic non-imprisonment property mentioned above \cite[Prop.\ 4.41]{minguzzi18b}.
A spacetime is called {\em strongly causal} if for every point  and for every neighborhood of the point there is a smaller
 neighborhood such that no causal curve intersects it more than once. A spacetime is called {\em causally continuous} if it is strongly causal and {\em reflective}: $I^+(q) \subset I^+(p) \Leftrightarrow I^-(p)\subset I^-(q)$. It is known that causal simplicity $\Rightarrow$ causal continuity $\Rightarrow$ strong causality $\Rightarrow$ non-total imprisonment $\Rightarrow$ causality. For other results and conventions on causality not explicitly mentioned in this work, we refer the reader to the review \cite{minguzzi18b}.

\section{General results on pseudoconvexity}\label{sec:caus}

 We start recalling some definitions, cf.\ \cite[Chs.\ 7, 11]{BEE:96}.

\begin{definition}
A spacetime $(M,g)$ is called (causal, null or maximally null) pseudoconvex, if for any compact set $K$, there exists another compact set $K'$, such that each geodesic of the respective type with both endpoints in $K$ must be entirely contained in $K'$.
\end{definition}

Clearly pseudoconvexity implies causal pseudoconvexity which implies null pseudoconvexity which again is stronger than maximal null pseudoconvexity.
There is also a Riemannian version of the notion:

\begin{definition}
A Riemannian manifold $(\Sigma,h)$ is called (minimally) pseudoconvex, if for any compact set $C$, there exists another compact set $C'$, such that each (minimal) geodesic with endpoints in $C$ must be entirely contained in $C'$.
\end{definition}

Clearly, pseudoconvexity implies minimal pseudoconvexity.
Often it is useful to relate pseudoconvexity to a certain limiting property of geodesic segments which we define next.

\begin{definition}
We say a pseudo-Riemannian manifold $M$ has the limit geodesic segment property (LGS) if the following holds true: Given any pair of converging sequences of points $p_n\to p$ and $q_n\to q\not=p$ and any sequence $\sigma_n$ of geodesic segments connecting $p_n$ to $q_n$, there is a subsequence of $\sigma_n$ (in a suitable affine reparametrization) converging (locally uniformly) to a geodesic $\sigma$ from $p$ to $q$.
\end{definition}

In the Riemannian case we will speak of the minimal LGS  if all $\sigma_n$ (and hence $\sigma$) have the corresponding property. Similarly, in the Lorentzian case,
we will speak of the causal, null, maximal null LGS, if all $\sigma _n$ (and hence $\sigma$) have the corresponding property.

Recall from  \cite[Def.\ 2]{BP:89} that a Riemannian manifold $(\Sigma,h)$ is {\it disprisoning} if no forward inextensible geodesic $\gamma\colon [0,\omega)\to
\Sigma$ has compact closure.

\begin{lemma} \label{viw}
A Riemannian manifold $(\Sigma,h)$ satisfies the LGS if and only if it is disprisoning and pseudoconvex.
\end{lemma}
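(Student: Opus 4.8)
The plan is to establish the two implications separately, proving $(\Leftarrow)$ directly and $(\Rightarrow)$ by contraposition on each of its two conclusions.

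For $(\Leftarrow)$ I would assume $(\Sigma,h)$ is disprisoning and pseudoconvex and take $p_n\to p$, $q_n\to q\neq p$, and geodesic segments $\sigma_n$ from $p_n$ to $q_n$, parametrized by arc length as $\sigma_n\colon[0,L_n]\to\Sigma$. Since $q\neq p$ one has $L_n\geq d^h(p_n,q_n)\to d^h(p,q)>0$. The set $K:=\{p,q\}\cup\{p_n\}\cup\{q_n\}$ is compact and contains all endpoints, so pseudoconvexity supplies a compact $K'$ with $\mathrm{im}(\sigma_n)\subseteq K'$ for all $n$, while the unit initial velocities $\dot\sigma_n(0)$ lie in the compact unit sphere bundle over $K$ and hence, along a subsequence, converge to a unit vector $u$ based at $p$. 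Let $\sigma$ be the maximal geodesic with data $(p,u)$, with forward domain $[0,\beta)$. The first key point is that $L_n$ is bounded: otherwise, continuous dependence of geodesics on initial data forces $\mathrm{im}(\sigma|_{[0,S]})\subseteq K'$ for every $S<\beta$, so the forward‑inextensible geodesic $\sigma|_{[0,\beta)}$ would have relatively compact image, contradicting disprisoning. Hence $L_n\to L_\infty>0$ along a subsequence, and the same argument excludes $\beta\leq L_\infty$, so $\sigma$ is defined on $[0,L_\infty]$. Continuous dependence then makes the affine reparametrizations of $\sigma_n$ onto $[0,L_\infty]$ converge uniformly to $\sigma|_{[0,L_\infty]}$, and passing to the limit in the endpoints yields $\sigma(0)=p$ and $\sigma(L_\infty)=q$; thus the LGS holds.

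For $(\Rightarrow)$, suppose first $(\Sigma,h)$ is not disprisoning, so there is a forward‑inextensible geodesic $\gamma\colon[0,\omega)\to\Sigma$ with $\overline{\gamma([0,\omega))}$ compact. Since geodesics have constant speed, $(\gamma,\dot\gamma)$ stays in a compact subset of $TM$, so $\gamma$ would extend past a finite $\omega$; hence $\omega=\infty$, and as a unit‑speed geodesic cannot tend to a point as $t\to\infty$, there are $s_k\to\infty$ and $q\neq p:=\gamma(0)$ with $\gamma(s_k)\to q$. The segments $\sigma_k:=\gamma|_{[0,s_k]}$ run from $p$ to $\gamma(s_k)\to q$ and have length $s_k$; but affine reparametrization preserves length, so no subsequence can converge to a geodesic segment of finite length joining $p$ and $q$ — the LGS fails. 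Next suppose $(\Sigma,h)$ is not pseudoconvex: fix a compact $C$ with no admissible $C'$, an exhaustion $\{C_k'\}$ of $\Sigma$ by compacts, and geodesic segments $\sigma_k\colon[0,L_k]\to\Sigma$ (arc length) with endpoints $p_k,q_k\in C$ and $\mathrm{im}(\sigma_k)\not\subseteq C_k'$. If $L_k\to 0$ along a subsequence, then $\sigma_k$ lies in $\overline{B}(C,L_k)$, which is compact for small radii and hence inside $C_k'$ for large $k$ — impossible; so $L_k\geq\delta>0$. Pass to a subsequence with $p_k\to p$, $q_k\to q$. If $q\neq p$, apply the LGS to the $\sigma_k$: the (uniform) limit $\sigma$ has compact image, so the $\sigma_k$ eventually lie in a fixed compact neighborhood of $\mathrm{im}(\sigma)$, hence in $C_k'$ — a contradiction. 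If $q=p$, let $u$ be a subsequential limit of $\dot\sigma_k(0)$, $\zeta$ the associated maximal geodesic, and choose $s_0\in(0,\delta)$ small with $\zeta(s_0)\neq p$; then $\sigma_k|_{[0,s_0]}\to\zeta|_{[0,s_0]}$, so $\sigma_k(s_0)\to\zeta(s_0)\neq p=q$, and the LGS applied to the truncated segments $\sigma_k|_{[s_0,L_k]}$ (whose endpoints now tend to distinct points) confines them eventually to a fixed compact set; combined with the confinement of $\sigma_k|_{[0,s_0]}$, this puts $\mathrm{im}(\sigma_k)$ inside $C_k'$ for large $k$, again a contradiction. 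Hence $(\Sigma,h)$ is pseudoconvex.

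I expect the main obstacle to be the case $q=p$ in the implication $\mathrm{LGS}\Rightarrow$ pseudoconvex, since the hypothesis gives no information directly when the two endpoint sequences share a limit: one must first rule out vanishing lengths and then shave off a definite initial arc from each $\sigma_k$ to manufacture segments whose endpoints tend to genuinely distinct points. The recurring technical ingredient, used also in $(\Leftarrow)$, is the interplay between continuous dependence of geodesics on initial conditions and the fact that a forward‑inextensible geodesic with relatively compact image contradicts disprisoning — this is what produces the length bound on the $\sigma_n$ and guarantees that the limiting geodesic is defined long enough to reach $q$.
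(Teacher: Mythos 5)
Your proof is correct and follows essentially the same route as the paper's: for ($\Leftarrow$) pseudoconvexity confines the segments to a compact set and disprisoning bounds their lengths (via the maximal geodesic issued from a subsequential limit of the unit initial velocities), which yields the convergent subsequence; for ($\Rightarrow$) you contradict LGS separately from the failure of disprisoning (initial segments of an imprisoned inextendible geodesic) and of pseudoconvexity (segments with endpoints in a fixed compact set escaping a compact exhaustion). The only real difference is that you also treat the case where the endpoint limits coincide ($q=p$) in the pseudoconvexity half of ($\Rightarrow$) --- a case the paper's argument passes over silently, since the LGS hypothesis requires $q\neq p$ --- and your device of ruling out vanishing lengths and truncating a definite initial arc handles it correctly.
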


\begin{proof}
First assume that LGS holds. Let $C\subset \Sigma$ compact be given. Choose a compact exhaustion $\{C_n\}_{n\in\mathbb{N}}$ of $\Sigma$.
If pseudoconvexity fails for $C$ there exists a sequence $\gamma_n\colon[0,1]\to \Sigma$ with endpoints in $C$ such that $\gamma_n$ leaves $C_n$.
By the LGS $\{\gamma_n\}_{n\in\mathbb{N}}$ contains a convergent subsequence. The limit geodesic $\gamma\colon [0,1]\to \Sigma$ is contained in
some $C_N$. This clearly contradicts the initial assumption. Therefore $(\Sigma,h)$ is pseudoconvex.

Next assume that LGS holds and there exists a inextensible geodesic ray $\gamma\colon[0,\omega)\to \Sigma$ with compact closure.
Then we have $\omega=\infty$ and $\gamma$ has infinite length. In that case the sequence  $\gamma_n\colon[0,1]\to\Sigma$ where $\gamma_n(t):=
\gamma(nt)$ has no convergent subsequence.

Finally, assume that $(\Sigma,h)$ is pseudoconvex and disprisoning. Let $p_n,q_n\in\Sigma$ be sequences converging to $p$ and $q\neq p$ respectively
and $\gamma_n\colon [0,1]\to \Sigma$ be a sequence of geodesics from $p_n$ to $q_n$.  Let $C$ be a compact set given by the union of a compact neighborhood of $p$ with a compact neighborhood of $q$. Without loss of generality  we can assume $p_n,q_n\in C$ for each $n$.
There exists a compact set $C'\subset \Sigma$ such that
$\gamma_n\subset C'$ for all $n\in\mathbb{N}$ by pseudoconvexity. Since the manifold is disprisoning the length $L^h(\gamma_n)$ is uniformly
bounded from above. Then by a standard argument for geodesics it follows that there exists a convergent subsequence.
\end{proof}

By the previous result every compact (closed) Riemannian manifold is trivially pseudoconvex but fails to satisfy the LGS as, certainly, it  is not disprisoning. For instance,  $T^2$ is pseudoconvex but does not satisfy the LGS. Choosing a geodesic $\sigma$ with irrational slope, which hence is dense in $T^2$, we may choose a point $x\not\in\sigma$ and $t_n$ such that $\sigma(t_n)\to x$. But the sequence of geodesic segments $\sigma|_{[0,t_n]}$ has no subsequence converging to a geodesic between $\sigma(0)$ and $x$.

In this connection we mention a  Theorem of Serre, see \cite{serre51}, implying that any pair of points in a noncontractible complete Riemannian manifold are connected by a sequence of geodesics whose lengths are diverging.

Completeness implies convexity by the   Hopf-Rinow theorem \cite{jost11}, but completeness (hence convexity)  does not imply pseudoconvexity (e.g.\ a complete surface with infinitely many holes), cf.\  \cite{sanchez01}. For another example of convex but not pseudoconvex space see Theorem 2.7 in \cite{hedicke19}.

The next result is analogous to the previous one, but for the ``minimal'' case.

\begin{lemma}\label{lem:mpc}
A Riemannian manifold $(\Sigma,h)$ satisfies the minimal LGS if and only if it is minimally pseudoconvex.
\end{lemma}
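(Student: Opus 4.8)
The plan is to imitate the proof of Lemma~\ref{viw}, the essential point being that in the ``minimal'' setting the disprisoning hypothesis is no longer needed: a minimizing geodesic with endpoints in a compact set $C$ has length at most $\sup_{C\times C}d^h<\infty$, so the uniform length bound that disprisoning supplied in Lemma~\ref{viw} is now automatic. The two implications are then proved by the same two arguments as there — a convergent-subsequence argument for one direction and a compact-exhaustion-and-contradiction argument for the other — with this observation replacing disprisoning.

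For ``minimally pseudoconvex $\Rightarrow$ minimal LGS'', I would take $p_n\to p$, $q_n\to q\neq p$ and minimizing geodesics $\gamma_n$ from $p_n$ to $q_n$, reparametrize each affinely to constant speed on $[0,1]$, and set $C:=K_p\cup K_q$ for compact neighbourhoods $K_p\ni p$, $K_q\ni q$, so that $p_n,q_n\in C$ eventually. Minimal pseudoconvexity then yields a compact $C'$ with $\gamma_n\subset C'$ for all $n$, while $|\dot\gamma_n|_h=L^h(\gamma_n)=d^h(p_n,q_n)\to d^h(p,q)\in(0,\infty)$ gives a uniform speed bound $A$. Hence the initial data $(\gamma_n(0),\dot\gamma_n(0))$ lie in the compact subset $\{(x,v)\in T\Sigma:\ x\in C',\ |v|_h\le A\}$ of $T\Sigma$, and along a subsequence they converge to some $(p,v)$. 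Since the integral curves of the geodesic spray through these points remain in that compact set for $t\in[0,1]$, a standard ODE argument (continuous dependence on initial conditions, together with the fact that an integral curve of a smooth vector field which remains in a compact set extends past the endpoints of its interval of definition) shows that the geodesic $\gamma$ with $\gamma(0)=p$, $\dot\gamma(0)=v$ is defined on all of $[0,1]$ and $\gamma_n\to\gamma$ in $C^1$ there. Then $\gamma(1)=\lim q_n=q$ and $L^h(\gamma)=|v|_h=\lim d^h(p_n,q_n)=d^h(p,q)$, so $\gamma$ is a (non-constant) minimizing geodesic from $p$ to $q$, which is exactly what the minimal LGS demands.

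For the converse I would argue by contradiction: fix a compact $C$, a compact exhaustion $\{C_n\}$ of $\Sigma$, and, assuming minimal pseudoconvexity fails for $C$, minimizing geodesics $\gamma_n$ (arclength-parametrized on $[0,\ell_n]$ with $\ell_n=d^h(p_n,q_n)$) with endpoints $p_n,q_n\in C$ but $\gamma_n\not\subset C_n$; after passing to a subsequence $p_n\to p$ and $q_n\to q$. If $q=p$ then $\ell_n\to0$, so every point of $\gamma_n$ lies within $\ell_n+d^h(p_n,p)\to0$ of $p$; picking $r>0$ with $\overline{B^h(p,r)}$ compact forces $\gamma_n\subset\overline{B^h(p,r)}\subset C_n$ for large $n$, a contradiction. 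If $q\neq p$, the minimal LGS provides a subsequence $\gamma_{n_k}$ converging, in a suitable affine reparametrization, to a minimizing geodesic $\gamma$ from $p$ to $q$; one then checks that the images of the $\gamma_{n_k}$ converge to the compact image of $\gamma$ in the Hausdorff distance, takes a neighbourhood of that image with compact closure (available since $\Sigma$ is locally compact) and some $C_M$ containing it, and concludes $\gamma_{n_k}\subset C_M\subset C_{n_k}$ for large $k$, again a contradiction. I expect the step needing the most care to be this Hausdorff convergence of images, specifically controlling the part of $\gamma_{n_k}$ near the moving endpoint $q_{n_k}$: because $\gamma_{n_k}$ is minimizing, such a terminal piece has length at most $\ell_{n_k}$ minus the arclength already traversed, so it collapses onto $q$ and stays close to the image of $\gamma$. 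The dual piece of bookkeeping in the first implication is the verification that the limit geodesic is defined on the whole interval $[0,1]$, which is where the extension property of integral curves trapped in a compact set is used.
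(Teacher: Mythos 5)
Your proof is correct and takes essentially the same route as the paper: the compact-exhaustion contradiction argument of Lemma \ref{viw} for ``minimal LGS $\Rightarrow$ minimal pseudoconvexity'', and for the converse a convergence-of-initial-data argument in which the length bound that disprisoning supplied in Lemma \ref{viw} is replaced by the automatic bound for minimizing geodesics with endpoints in a compact set. You are in fact somewhat more careful than the paper at two points it glosses over, namely the extension of the limit geodesic to the full parameter interval (the paper's terse ``$b<\beta$'') and the case where both endpoint sequences converge to the same point.
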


\begin{proof}
We already know that if LGS holds then pseudoconvexity holds, which implies minimal pseudoconvexity.

For the converse, let $\Sigma$ be minimally pseudoconvex. Further let $\sigma_n:[0,b_n]\to \Sigma$ be minimizing geodesics from $x_n$ to $y_n$, parametrized by arc length, and let $x_n\to x$, $y_n\to y\not=x$. Let $C$ be a compact set given by the union of a compact neighborhood of $x$ with a compact neighborhood of $y$. Without loss of generality  we can assume $x_n,y_n\in C$ for each $n$.
By compactness $\sigma_n'(0)$ converges up to a subsequence to a unit vector $v\in T_x\Sigma$.
Let $\sigma:[0,\beta)\to\Sigma$ be the maximally extended geodesic with initial vector $v$. Then $\sigma_n$ converges to $\sigma$ locally uniformly on $[0,\beta)$. By minimal pseudoconvexity all $\sigma_n$ are contained in some compact $C'$ and so $L^h(\sigma_n) \leq\text{diam}^h(C')$. Hence, again up to a subsequence, $b_n$ converge to some $b<\beta$. Hence $\sigma_n(b_n)\to \sigma(b)=q$ and we are done.
\end{proof}

The next result is a Lorentzian analog to \ref{viw}. It slightly improves \cite[Lem.\ 11.20]{BEE:96}.

\begin{lemma} \label{lem:mnpc}
$(M,g)$ satisfies the (maximal) null LGS  iff it is  non-total imprisoning and   (maximal) null pseudoconvex.
\end{lemma}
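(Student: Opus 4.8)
The plan is to mimic the structure of the proof of Lemma \ref{viw}, carefully importing the causal-theoretic replacements for the Riemannian notions. Concretely, the forward direction ``LGS $\Rightarrow$ non-total imprisonment $+$ pseudoconvexity'' should go exactly as in Lemma \ref{viw}: for pseudoconvexity, given a compact $K$ and a compact exhaustion $\{K_n\}$ of $M$, suppose there were causal (resp.\ maximal null) geodesic segments $\sigma_n$ with endpoints in $K$ escaping $K_n$; after passing to a subsequence with both endpoint sequences convergent (to $p$ and $q$ respectively), if $p\neq q$ the (maximal) null LGS produces a limit geodesic $\sigma$ from $p$ to $q$ contained in some fixed $K_N$, contradicting escape; the case $p=q$ has to be handled separately, and this is where a small amount of care is needed — one should either argue that segments with nearly-coincident causal endpoints stay in a small neighborhood (which fails without some assumption, so in fact one must first establish strong causality from the LGS, see below) or restrict the escape argument to subsequences where the endpoints stay a definite distance apart.

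For non-total imprisonment, I would argue by contraposition in the style of the $T^2$ discussion and the second paragraph of the proof of Lemma \ref{viw}: if some inextendible causal geodesic $\gamma\colon[0,\omega)\to M$ (or a past-inextendible one, by time duality) had compact closure, then by the non-imprisonment analysis in \cite{minguzzi18b} one knows $\gamma$ has infinite ``affine length'' in a suitable sense and accumulates; taking $\gamma_n(t):=\gamma(nt)$ on $[0,1]$, the endpoints converge along a subsequence to points $p,q$ in the compact closure, and one must show they can be chosen distinct (using that an imprisoned geodesic cannot be eventually constant and in fact limits onto a nontrivial set) — then no subsequence of the $\gamma_n$, in any affine reparametrization, can converge to a geodesic, since the reparametrized segments have ``derivative'' blowing up, contradicting the LGS.

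For the converse, ``non-total imprisonment $+$ (maximal) null pseudoconvexity $\Rightarrow$ (maximal) null LGS'', I would follow the third paragraph of the proof of Lemma \ref{viw} together with the arc-length/affine-parameter bookkeeping from the proof of Lemma \ref{lem:mpc}. Given $p_n\to p$, $q_n\to q\neq p$ and causal (resp.\ maximal null) geodesics $\sigma_n$ from $p_n$ to $q_n$, put $C$ equal to the union of compact neighborhoods of $p$ and $q$ and assume $p_n,q_n\in C$; pseudoconvexity gives a compact $C'$ with all $\sigma_n\subset C'$. The key replacement for ``disprisoning bounds the length'' is: non-total imprisonment forces a uniform bound on a suitable affine parameter length of causal geodesic segments contained in the fixed compact set $C'$ — this is precisely the content one extracts from \cite[Prop.\ 4.41]{minguzzi18b} and the standard limit curve machinery. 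With that bound, choosing the affine parametrization on $[0,1]$ (or normalizing the initial velocity on a compact subset of the tangent bundle restricted to a neighborhood of $p$), compactness of the relevant piece of $TM$ extracts a convergent subsequence of initial data, and the geodesic flow gives local uniform convergence $\sigma_n\to\sigma$, with $\sigma$ a geodesic from $p$ to $q$; causality (resp.\ nullity) of $\sigma$ is inherited as a closed condition on velocities, and maximality is inherited using upper semicontinuity of the Lorentzian distance under the hypotheses in force (strong causality, which holds since non-total imprisonment is weaker than but the relevant semicontinuity still applies on the compact set).

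The main obstacle I expect is the uniform affine-parameter bound in the converse, i.e.\ promoting ``the segments lie in a fixed compact set'' to ``the segments have uniformly bounded affine length,'' which in the Lorentzian setting is genuinely delicate for causal (as opposed to timelike) geodesics and is exactly where the improvement over \cite[Lem.\ 11.20]{BEE:96} and the hypothesis of non-total imprisonment (rather than mere strong causality) enters; a secondary subtlety is the $p=q$ bookkeeping in the forward direction, which one should dispatch by first noting that the LGS implies strong causality (if strong causality failed at a point, one could produce causal segments with endpoints converging to that point but not shrinking to a point, violating LGS in the degenerate $p=q$ regime), after which the escaping-segments argument runs cleanly.
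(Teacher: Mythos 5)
Your converse direction is essentially workable, but the forward direction has a genuine gap at the non-total imprisonment step. You start from an arbitrary imprisoned inextendible \emph{causal} geodesic $\gamma$ and build the segments $\gamma_n(t)=\gamma(nt)$; but the hypothesis you must contradict is the (maximal) \emph{null} LGS, which only constrains sequences of null (respectively maximizing null) geodesic segments. An imprisoned timelike geodesic produces no violation of the null LGS at all, so your argument proves at best the ``causal'' variant of the lemma, not the stated one. The missing idea --- and the paper's key input --- is that total imprisonment always yields a future inextendible \emph{achronal lightlike} geodesic contained in a compact set (cf.\ \cite[Thm.\ 2.77]{minguzzi18b}); achronality makes every segment of it a maximizing null geodesic, so it can legitimately be fed into the (maximal) null LGS. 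With that geodesic in hand, the contradiction is not obtained via ``derivatives blowing up'' (the LGS allows an arbitrary affine reparametrization of each segment, so velocity blow-up in one normalization proves nothing); the paper instead argues that any limit geodesic of the segments $\gamma|_{[0,a_n]}$ would have to coincide with a segment of $\gamma$ itself, which cannot join $\gamma(0)$ to the limit endpoint $q$ since the convergence domain exceeds every $[0,b]\subset[0,a)$.

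In the converse you take a different route from the paper: the paper applies the limit curve theorem with two endpoints to the $\sigma_n\subset C'$, so that either a subsequence converges to a connecting causal limit, or there is a future inextendible limit curve $\sigma^p\subset C'$ from $p$, which directly contradicts non-total imprisonment --- no uniform length bound is ever invoked. Your plan of first extracting a uniform ($h$-arc/affine) length bound for causal geodesics inside the fixed compact set is viable in principle (boundedness of $h$-arc length of causal curves in compact sets is in fact a characterization of non-total imprisonment), but you only assert it, and \cite[Prop.\ 4.41]{minguzzi18b} is not the statement that delivers it. Moreover, your justification that the limit inherits maximality is incorrect as written: non-total imprisonment does \emph{not} imply strong causality (the implication goes the other way), and no semicontinuity of the Lorentzian distance is needed --- if the limit null geodesic from $p$ to $q$ were not maximizing then $(p,q)\in I$, hence $(p_n,q_n)\in I$ for large $n$ by openness of $I$, contradicting the maximality of $\sigma_n$ (this is exactly the argument used in Theorem \ref{thm:beem92}). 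Your $p=q$ concern in the forward pseudoconvexity step is a fair observation about the template of Lemma \ref{viw}, but the proposed repair via ``LGS implies strong causality'' is left undeveloped and is not needed for the paper's argument.
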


A similar results with ``causal'' replacing the two instances of ``null'' holds.

\begin{proof}
The proof of ``(maximal) null LGS $\Rightarrow$ (maximal) null pseudoconvex'', can be done in the same way as the proof of Lemma \ref{viw} (first paragraph). If non-total imprisonment were violated we could find a future inextendible lightlike geodesic $\gamma:[0,a)\to M$, achronal hence maximizing, contained in a compact set $C$ cf.\ \cite[Thm.\ 2.77]{minguzzi18b}. Let $a_n\to a$, and pass to a subsequence (denoted in the same way) so that $\gamma(a_n)\to q$. But then the maximizing lightlike geodesics $\gamma_n= \gamma \vert_{[0, a_n]}$ with $a_n\to a$, do not converge to a geodesic $\eta$ connecting $p:=\gamma(0)$ to $q$, for if that were the case, as each $\gamma_n$ coincides with a segment of $\gamma$, $\eta$ would coincide with a segment of $\gamma$, which is impossible since certainly $\gamma_n$ converges on a domain larger than any subinterval $[0,b]\subset [0,a)$.

For the converse, assume non-total imprisonment and (maximal) null pseudoconvexity. Let $\sigma_n$ be (maximal) null geodesics with endpoints $p_n$, $q_n$ converging to $p$ and $q$, respectively.  Let $C$ be a compact set given by the union of a compact neighborhood of $p$ with a compact neighborhood of $q$. Without loss of generality  we can assume $p_n,q_n\in C$ for each $n$. By (maximal) null pseudoconvexity there is a compact set $C'$ containing all  the curves $\sigma_n$. By the limit curve theorem (two endpoints case \cite{minguzzi18b}), the sequence must admit a converging subsequence, otherwise we could find a future inextendible causal curve $\sigma^p \subset C'$ starting from $p$ to which some, suitably parametrized, subsequence of $\sigma_n$ converges uniformly on compact subset. But then $\sigma^p$ would contradict non-total imprisonment.
\end{proof}

In a globally hyperbolic spacetime for every compact subset $K$, $J^+(K)\cap J^{-}(K)$ is compact.
As a consequence, global hyperbolicity trivially implies causal pseudoconvexity (and hence maximal causal pseudoconvexity). However, causal pseudoconvexity does not imply global hyperbolicity, e.g.\ a strip $\vert x\vert<1$ in Minkowski 1+1 spacetime \cite{BEE:96}.

We now give a simplified proof of the next Lorentzian result originally proved in  \cite[Th.\ 1]{beem92}. In the last section we shall show that the reverse implication does not hold.

\begin{theorem}\label{thm:beem92}
If $(M,g)$ is causally simple, then it is maximally null pseudoconvex (equivalently, it satisfies the  maximal null LGS).
\end{theorem}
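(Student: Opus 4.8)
The plan is to prove causal simplicity $\Rightarrow$ maximal null LGS, since Lemma \ref{lem:mnpc} tells us the latter is equivalent (given non-total imprisonment, which follows from causal simplicity via the known chain of implications) to maximal null pseudoconvexity. So it suffices to verify the defining property of the maximal null LGS: given $p_n\to p$, $q_n\to q\neq p$ and maximizing null geodesic segments $\sigma_n$ from $p_n$ to $q_n$, one must extract a subsequence converging to a maximizing null geodesic from $p$ to $q$.

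\medskip

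First I would record that $q\in J^+(p)$: since $q_n\in J^+(p_n)$ and the causal relation $J$ is closed (causal simplicity), passing to the limit gives $(p,q)\in J$, and since $q\neq p$ and the spacetime is causal, $q\in J^+(p)\setminus\{p\}$. Next I would invoke the limit curve theorem in its two-endpoint form, exactly as in the proof of Lemma \ref{lem:mnpc}: either a subsequence of the $\sigma_n$ (suitably parametrized) converges uniformly on compact sets to a causal curve $\sigma$ from $p$ to $q$, or else there is a future-inextendible causal limit curve $\sigma^p$ issuing from $p$. The second alternative is where the content lies — I would rule it out. For this, note that each $\sigma_n$ is maximizing, hence achronal, hence its Lorentzian length is zero and $d^g(p_n,q_n)=0$; by upper semicontinuity of the Lorentzian distance in causally simple (indeed any causally continuous) spacetimes, or more simply because causal simplicity gives closedness of $J$ while $I$ is open, one deduces $d^g(p,q)=0$, i.e. $q\notin I^+(p)$. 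Now if the limit curve theorem produced only a future-inextendible $\sigma^p$ from $p$, the points $q_n$ would be forced to ``escape to infinity'' along $\sigma^p$, contradicting $q_n\to q$ with $q$ a genuine point of $M$ lying on (a reparametrization of) that limit curve at finite parameter — more carefully, one argues as in Lemma \ref{lem:mnpc} that uniform convergence on compacta of a subsequence to $\sigma^p$ together with $q_n\to q$ forces $q$ to lie on $\sigma^p$, and then the tail of $\sigma^p$ beyond $q$ is a future-inextendible causal curve, so by the limit curve theorem's non-imprisonment consequences (or directly, by non-total imprisonment which causal simplicity implies) $\sigma^p$ cannot stay in the compact set $C'$ that maximal null pseudoconvexity — wait, we do not yet have pseudoconvexity. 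Instead I would bound things directly: causal simplicity implies $J^+(p)\cap J^-(q)$ is compact is \emph{false} in general (that is global hyperbolicity), so I cannot use that; I must genuinely use the limit-curve dichotomy together with the fact that the alternative limit curve, being future-inextendible, cannot have $q$ as an ``interior'' point in the way a converging-to-$q$ sequence requires, which is precisely the argument spelled out for the forward direction of Lemma \ref{lem:mnpc}.

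\medskip

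Having obtained a limit causal curve $\sigma$ from $p$ to $q$, I would upgrade it to a maximizing null geodesic. Since $d^g(p_n,q_n)=0$ and $d^g(p,q)=0$, and the length functional is upper semicontinuous along the uniformly-converging subsequence, $\sigma$ has zero length; a causal curve of zero length between distinct causally related points is, after reparametrization, an achronal null geodesic, hence maximizing. (Alternatively: the causal curves $\sigma_n$ are null geodesics; in a suitable affine parametrization their initial data converge by the usual argument, and the limit is a null geodesic which is achronal because $q\notin I^+(p)$, hence maximizing.) This establishes the maximal null LGS, and by Lemma \ref{lem:mnpc} — or rather by the already-proven implication ``maximal null LGS $\Rightarrow$ maximal null pseudoconvex,'' noting non-total imprisonment holds — we conclude $(M,g)$ is maximally null pseudoconvex.

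\medskip

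I expect the main obstacle to be the exclusion of the runaway alternative in the limit curve theorem without access to a pseudoconvexity-supplied compact set: one cannot pre-confine the $\sigma_n$, so the argument must extract the contradiction purely from the topology of the limit curve (future-inextendibility) clashing with $q_n\to q$ at finite parameter, mirroring the forward direction of Lemma \ref{lem:mnpc}. The role of causal simplicity is comparatively light — it enters only to guarantee $d^g(p,q)=0$ (equivalently $q\notin I^+(p)$, equivalently the limit null geodesic is achronal hence maximizing) and to guarantee non-total imprisonment — which is presumably why the paper advertises this as a \emph{simplified} proof of Beem's theorem.
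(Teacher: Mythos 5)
Your reduction to the maximal null LGS and your treatment of the convergent branch (zero Lorentzian length, $q\notin I^{+}(p)$, hence the limit is an achronal, maximizing null geodesic) are fine and close to the paper's own observations. The genuine gap is in the other branch of the limit curve theorem, i.e.\ exactly where you yourself hesitate. You claim that uniform convergence on compact parameter intervals of a subsequence to the future-inextendible limit curve $\sigma^p$, together with $q_n\to q$, ``forces $q$ to lie on $\sigma^p$''. It does not: in the escaping case the auxiliary Riemannian lengths of the $\sigma_n$ blow up, the endpoints $q_n$ are reached at parameter values tending to infinity, and the convergence statement controls nothing there, so the limit ray need not come anywhere near $q$. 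Your fallback, ``argue as in the forward direction of Lemma~\ref{lem:mnpc}'', also does not apply: in that lemma the contradiction with non-total imprisonment works because pseudoconvexity confines all the $\sigma_n$, hence the inextendible limit curve, to a fixed compact set $C'$; here no such compact set is available (as you notice mid-sentence), and an inextendible causal curve emanating from $p$ that is free to leave every compact set contradicts nothing. So the escaping alternative is not ruled out, and that alternative is precisely the content of Beem's theorem.

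What the paper does at this point is the step your proposal is missing. In the unbounded case the two-endpoint limit curve theorem gives \emph{both} a future lightlike ray $\sigma^p$ from $p$ and a past lightlike ray $\sigma^q$ ending at $q$, with $(p',q')\in \bar J$ for all $p'\in\sigma^p$, $q'\in\sigma^q$; causal simplicity upgrades this to $(p',q')\in J$. Since $(p,q)\in E=J\setminus I$, for any $p'\in\sigma^p\setminus\{p\}$ the causal curve from $p'$ to $q$ must be a corner-free prolongation of $\sigma^p$ (otherwise $(p,q)\in I$), whence $q\in\sigma^p$; repeating the argument with a point of $\sigma^p$ beyond $q$ forces $\sigma^p$ to return to $q$, producing a closed causal curve and contradicting causality. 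Note that this uses the relation $(p',q')\in\bar J=J$ along the rays in an essential way, so your closing assessment that causal simplicity enters only ``lightly'' (via $d^g(p,q)=0$ and non-imprisonment) is inaccurate: the closedness of $J$ applied along the limit rays is what makes the exclusion of the runaway case possible at all.
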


The equivalence in the different formulations of the conclusion  is given by Lemma \ref{lem:mnpc}, since causal simplicity implies non-total imprisonment.

\begin{proof}
Suppose that the claim is false, then we can find a compact set $K$, and maximal null geodesic segments $\gamma_n$ with endpoints $p_n,q_n\in K$, such that for some $r_n\in \gamma_n$, $r_n$ escapes every compact set. Further we can assume $p_n\to p\in K$, and $q_n\to q\in K$. As $(p_n,q_n)\in J$ we have by causal simplicity $(p,q)\in J$. Moreover, it cannot be $(p,q)\in I$ otherwise for sufficiently large $n$, $(p_n,q_n)\in I$ which contradicts the maximality of $\gamma_n$. Thus $(p,q)\in E=J\setminus I$.

By the limit curve theorem \cite[Th.\ 2.53(ii)]{minguzzi18b} there are lightlike rays $\sigma^p$ starting from $p$ and $\sigma^q$ ending at $q$ such that for every $p'\in \sigma^p$ and $q'\in \sigma^q$, we have $(p',q')\in \bar J=J$ where we used causal simplicity. For every $p'\in \sigma^p\backslash\{p\}$ we have $(p',q)\in J$ but the causal curve connecting $p'$ to $q$ must be the prolongation of the lightlike ray $\sigma^p$ otherwise $(p,q)\in I$, a contradiction. Thus $q\in \sigma^p$ and repeating the argument by taking $p'$ along $\sigma^p$ after $q$ one gets that $\sigma^p$ passes through $q$ several times, which violates causality.
\end{proof}

Next we establish that for Riemannian manifolds convexity is stronger than the minimal LGS. Again, in the last section we shall show that the reverse implication does not hold.

\begin{theorem} \label{man}
If $\Sigma$ is convex, then it  satisfies the minimal LGS (equivalently, it is minimal pseudoconvex).
\end{theorem}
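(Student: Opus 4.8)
The plan is to prove the \emph{minimal} LGS directly; by Lemma \ref{lem:mpc} this is equivalent to minimal pseudoconvexity, so the parenthetical in the statement is then automatic. I would fix converging sequences $p_n\to p$ and $q_n\to q\neq p$ and minimizing geodesics $\sigma_n$ from $p_n$ to $q_n$, parametrized by arc length on $[0,b_n]$. Since $d^h$ is continuous and $\Sigma$ is connected, $b_n=d^h(p_n,q_n)\to d^h(p,q)=:b\in(0,\infty)$, and after discarding finitely many indices $b_n>0$. Passing to a subsequence, compactness of the $h$-unit sphere bundle over a compact neighborhood of $p$ gives $\sigma_n'(0)\to v\in T_p\Sigma$ with $|v|_h=1$. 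Let $\sigma\colon[0,\beta)\to\Sigma$ be the maximally extended geodesic with $\sigma(0)=p$ and $\sigma'(0)=v$; by continuous dependence of geodesics on their initial data, $\sigma_n\to\sigma$ locally uniformly on $[0,\beta)$ (as far as the $\sigma_n$ are defined), exactly as in the proof of Lemma \ref{lem:mpc}.

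The heart of the argument, and the only place where convexity enters, is the claim that $\beta>b$, i.e.\ that the limiting geodesic is long enough to reach $q$. I would argue by contradiction: suppose $\beta\le b$ and fix $t_0\in(0,\beta)$. Since subsegments of minimizing geodesics are minimizing, $d^h(\sigma_n(t_0),q_n)=b_n-t_0$ for large $n$, and letting $n\to\infty$ yields $d^h(\sigma(t_0),q)=b-t_0>0$. By convexity there is a minimizing geodesic $\tau$, parametrized by arc length on $[0,b-t_0]$, from $\sigma(t_0)$ to $q$. The concatenation of $\sigma|_{[0,t_0]}$ with $\tau$ joins $p$ to $q$ and has length $t_0+(b-t_0)=b=d^h(p,q)$, so it realizes the distance and is therefore, after arc length parametrization, a smooth geodesic; in particular it has no corner at $\sigma(t_0)$, which forces $\tau'(0)=\sigma'(t_0)$. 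Then $\tau$ and $s\mapsto\sigma(t_0+s)$ are geodesics with the same initial data, hence coincide on the overlap of their domains, while $\tau$ is defined on $[0,b-t_0]\supseteq[0,\beta-t_0]$, which is strictly larger than the interval $[0,\beta-t_0)$ on which $s\mapsto\sigma(t_0+s)$ is a priori defined. This extends $\sigma$ beyond the parameter $\beta$, contradicting maximality; hence $\beta>b$.

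With $\beta>b$ the rest is routine. The geodesic $\sigma$ is then defined on $[0,b]$, and the affine reparametrizations $s\mapsto\sigma_n(b_n s)$ of the $\sigma_n$ onto $[0,1]$ converge uniformly to $s\mapsto\sigma(bs)$, using the local uniform convergence on a compact subinterval of $[0,\beta)$ together with $b_n\to b$ and the uniform continuity of $\sigma$ there. Since $\sigma_n(b_n)=q_n\to q$, the limit geodesic runs from $p$ to $q$; and since the $\sigma_n$ are minimizing with lengths $b_n\to b=d^h(p,q)$, lower semicontinuity of length forces the limit to have length $b$, hence to be itself minimizing. This establishes the minimal LGS.

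I expect the genuinely delicate point to be the claim $\beta>b$: a priori the limiting geodesic could ``run off'' the (possibly incomplete) manifold before its parameter reaches $d^h(p,q)$, and convexity is precisely the hypothesis that lets me graft a geodesic segment onto $\sigma$ at $\sigma(t_0)$ and so contradict the maximality of its domain. All the other ingredients --- continuity of $d^h$, compactness of the unit sphere bundle, continuous dependence on initial data, lower semicontinuity of length --- are standard.
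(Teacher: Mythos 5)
Your proof is correct and follows essentially the same route as the paper: after extracting a limit of initial velocities, you use continuity of the distance, the fact that subsegments of minimizers are minimizing, and convexity to graft a minimizing geodesic from an interior point $\sigma(t_0)$ to $q$, concluding via the no-corner property of a curve that realizes the distance. The only difference is organizational: the paper splits into the cases $q\in\sigma$ and $q\notin\sigma$ and derives a strict length inequality, whereas you phrase the same gluing argument as the single claim $\beta>b$ (a contradiction with maximality of the geodesic's domain), which is a harmless streamlining.
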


The equivalence in the different formulations of the conclusion  is given by Lemma \ref{lem:mpc}.

\begin{proof}
Let $\sigma_n: [0,a_n]\to \Sigma$ be minimizing unit speed geodesics such that $p_n:=\sigma_n(0)\to p$ and $q_n:=\sigma_n(a_n)\to q\ne p$, $p,q\in \Sigma$. Without loss of generality we can assume that $\dot \sigma_n(0)\to u \in T_p \Sigma$ and that $a_n \to a>0$. Let $\sigma:I\to \Sigma$ be the unit speed geodesic that starts from $p$ with tangent $\dot \sigma(0)=u$. For every $t\in I$ by the continuity of the exponential map
\[
\sigma_n(t)=\exp_{p_n}(\dot \sigma_n(0) t)\to  \exp_{p}(u t)=\sigma(t).
\]
Moreover for every $c\in I$, $\sigma_n\vert_{[0,c]}$ is minimizing and so $\sigma\vert_{[0,c]}$ is minimizing. If there is $t\in I$ such that $\sigma(t)=q$, then it must be $t\ge a$ otherwise, as $t$ is the length of a curve (the curve $\sigma$) connecting $p$ to $q$, no $\sigma_n$ could be minimizing for sufficiently large $n$. However, $t>a$ cannot happen since otherwise it would be shorter to go from $p$ to $q$, passing from some $\sigma_n$, which would contradict that $\sigma\vert_{[0,t]}$ is minimizing. We conclude that if there is $t$ such that $\sigma(t)=q$ then $t=a$.

We can now assume that $q\notin \sigma$ otherwise we have finished, hence we can assume $I=[0,b)$. Observe that $d(p_n,q_n)=a_n$ and by the continuity of distance $d(p,q)=a$.
Now, for $0<\epsilon<b$,  we have $d(\sigma_n(\epsilon), q_n)=a_n-\epsilon$ as $\sigma_n$ is minimizing, and hence $d(\sigma(\epsilon), q)=\lim_n  d(\sigma_n(\epsilon), q_n)=a-\epsilon$ as $d$ is continuous. Thus any chosen minimizing curve $\gamma$ connecting $\sigma(\epsilon)$ with $q$ has length $a-\epsilon$. But  $d(p,q)<\ell(\sigma \vert_{[0,\epsilon]})+\ell(\gamma)\le \epsilon+a-\epsilon=a$ where the first inequality is strict because there must be a corner at $\sigma(\epsilon)$ between $\sigma$ and $\gamma$, otherwise it would be $q\in \sigma$. The contradiction proves that the connecting case is the only option.
\end{proof}

We do not known if pseudoconvexity  implies convexity. The next result represents an attempt in this direction. It is not used in what follows.

\begin{proposition}
Let $(\Sigma,h)$ be a Riemannian manifold which admits an equidimensional embedding into a complete manifold
$(\tilde{\Sigma},\tilde{h})$. Then $(\Sigma,h)$ is minimally pseudoconvex if and only if it is convex.
\end{proposition}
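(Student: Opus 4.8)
The plan is to prove both implications, noting that the backward direction (convex $\Rightarrow$ minimally pseudoconvex) is immediate from Theorem \ref{man}, so the content lies in the forward direction. So assume $(\Sigma,h)$ is minimally pseudoconvex and sits inside a complete $(\tilde\Sigma,\tilde h)$ as an equidimensional (hence open) submanifold with the induced metric. Fix $x,y\in\Sigma$; I want to produce a minimizing geodesic of $(\Sigma,h)$ between them. The natural strategy is to exhaust: pick an increasing sequence of relatively compact open sets $\Sigma_k$ with $\overline{\Sigma_k}\subset\Sigma$, $\bigcup_k\Sigma_k=\Sigma$, and $x,y\in\Sigma_1$. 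Since each $\overline{\Sigma_k}$ is compact in the complete ambient manifold, one can try to realize $d^h(x,y)$ by a limit of curves staying in $\Sigma$. More concretely, let $d_k$ denote the distance in $\Sigma$ using only paths contained in $\overline{\Sigma_k}$ (or a slightly fattened compact neighborhood); these decrease to $d^h(x,y)$, and for each $k$ a length-minimizing path within the compact set $\overline{\Sigma_k}$ exists by Hopf--Rinow-type compactness. This gives curves $\gamma_k$ from $x$ to $y$ with $L^h(\gamma_k)\to d^h(x,y)$.

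The key step is then to upgrade this minimizing sequence to an honest minimizing \emph{geodesic} of $\Sigma$ using minimal pseudoconvexity. The $\gamma_k$ need not be geodesics, but each is a concatenation of minimizing geodesic segments (being a minimizer within a compact manifold-with-corners, or, after a small perturbation, one can take them piecewise geodesic with a bounded number of breakpoints — alternatively work with the completion $\bar\Sigma$ inside $\tilde\Sigma$). The cleaner route: replace $\gamma_k$ by an actual geodesic $\sigma_k$ of $\Sigma$ obtained as follows. Using that $d^h(x,y)=\inf_k d_k(x,y)$ and $d_k(x,y)\downarrow d^h(x,y)$, and that the infimum over $\Sigma$ is not attained only if the minimizers "escape," one argues that for $k$ large the minimizer of $d_k$ must actually be a geodesic of $\Sigma$ (it cannot touch $\partial\Sigma_k$ if $d_k$ is close to $d^h$ and $\Sigma_{k-1}$ already almost realizes the distance), so $\sigma_k$ is a genuine minimizing geodesic segment of $\Sigma$ with endpoints $x,y$. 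Then the minimal LGS (Lemma \ref{lem:mpc}, equivalent to minimal pseudoconvexity) applied to the constant sequences $p_n=x$, $q_n=y$ and the geodesics $\sigma_k$ yields a subsequence converging locally uniformly to a geodesic $\sigma$ from $x$ to $y$; by lower semicontinuity of length under such convergence $L^h(\sigma)\le\liminf L^h(\sigma_k)=d^h(x,y)$, so $\sigma$ is the desired minimizing geodesic, proving convexity.

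The main obstacle I anticipate is the "cannot escape" argument that turns the compact-set minimizers $\gamma_k$ into genuine geodesics of $\Sigma$ with endpoints \emph{both} in a fixed compact set — one must rule out that the near-minimizers run to the boundary $\partial\Sigma$ (in $\tilde\Sigma$) and come back, which is exactly where equidimensional embeddedness into a \emph{complete} ambient manifold is used: it guarantees $\overline{\Sigma}$ (closure in $\tilde\Sigma$) is a length space in which minimizers between points of $\Sigma$ exist, and a boundary-touching minimizer of $\bar\Sigma$ would either already be a geodesic of $\Sigma$ after removing boundary arcs (contradicting minimality if it has a boundary corner) or its interior pieces form a uniformly-bounded-length family of $\Sigma$-geodesics to which Lemma \ref{lem:mpc} still applies. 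A secondary technical point is ensuring the limit geodesic $\sigma$ has endpoints exactly $x$ and $y$ and does not overshoot, but this is handled exactly as in the proof of Theorem \ref{man}: monotonicity of the distance along $\sigma$ forces the length parameter of $q$ along $\sigma$ to equal $d^h(x,y)$. I expect the write-up to reduce to carefully choosing the exhaustion and invoking Hopf--Rinow in $\tilde\Sigma$ together with Lemma \ref{lem:mpc}, with no heavy computation.
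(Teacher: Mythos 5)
Your backward direction (convexity $\Rightarrow$ minimal pseudoconvexity via Theorem \ref{man}) agrees with the paper, but the forward direction has a genuine gap exactly at the step you yourself flag as the ``main obstacle''. The assertion that for large $k$ the minimizer of the constrained distance $d_k$ cannot touch $\partial\Sigma_k$ (because ``$d_k$ is close to $d^h$ and $\Sigma_{k-1}$ already almost realizes the distance'') is unsubstantiated: in the typical non-attainment scenario the near-minimizing curves hug $\partial\Sigma$ inside $\tilde\Sigma$, and then for \emph{every} $k$ the constrained minimizer touches $\partial\Sigma_k$ where it approximates $\partial\Sigma$, with $d_k(x,y)\downarrow d^h(x,y)$ all the same; nothing in your argument up to that point uses minimal pseudoconvexity, which is precisely the hypothesis that must be invoked to exclude this escape. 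Moreover the step is internally inconsistent: if $\sigma_k$ really were a ``genuine minimizing geodesic segment of $\Sigma$ with endpoints $x,y$'' you would already be done and the LGS limit would be superfluous, while if $\sigma_k$ is only a minimizer among curves confined to $\overline{\Sigma_k}$ (a geodesic realizing $d_k$, not $d^h$), then Lemma \ref{lem:mpc} does not apply to the sequence, since the minimal LGS concerns geodesics that are minimizing in $(\Sigma,h)$. Your fallback via the closure $\overline{\Sigma}$ is also not secured: existence of length minimizers in $\overline{\Sigma}$ with its induced intrinsic metric is not automatic (local compactness/completeness of that length structure is not checked), and even granted existence, the interior pieces of a boundary-touching minimizer have endpoints tending to $\partial\Sigma$, hence not contained in any fixed compact subset of $\Sigma$, so minimal pseudoconvexity again cannot be applied to them as stated.

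For comparison, the paper closes exactly this hole by a different mechanism. It first proves a local convexity statement: for $y,z$ in the same connected component of $B^{\tilde h}_{r(\tilde x)}(\tilde x)\cap\Sigma$ (with $r$ the convexity radius of the complete ambient manifold) the unique minimal $\tilde h$-geodesic from $y$ to $z$ stays in $\Sigma$; this is an open--closed argument along a connecting curve, and it is the \emph{closedness} step where minimal pseudoconvexity is actually used. Then it takes a minimizing sequence of geodesic polygons with arc lengths bounded below, extracts a limit polygon in $\overline{\Sigma}$ using completeness of $(\tilde\Sigma,\tilde h)$, shows the limit is a $\tilde h$-geodesic, and finally rules out contact with $\partial\Sigma$ by an adaptation of the Cheeger--Ebin argument exploiting the convexity of the components $\Sigma_t$. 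To repair your proof you would need an argument of comparable substance at the boundary; the exhaustion-plus-LGS scheme as written does not supply it.
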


\begin{remark}
The proof shows that a positive lower bound on the convexity radius on bounded sets of $(\Sigma,h)$ suffices at least for the conclusion that
minimal pseudoconvexity implies convexity.
\end{remark}

\begin{proof}
(i) By Theorem \ref{man} convexity implies minimal pseudoconvexity without further assumptions.

(ii) Assume that $(\Sigma,h)$ is minimally pseudoconvex. We will show that the closure $\overline{\Sigma}$ of $\Sigma$ in $\tilde{\Sigma}$ is locally
convex, i.e. for any $x\in \overline{\Sigma}$ there exists $\epsilon>0$ such that $\overline{\Sigma}\cap B^{\tilde{h}}_\epsilon(x)$ is convex in $\tilde{\Sigma}$ (i.e.\ any two points in this set are connected by a geodesic segment which is minimizing among all the connecting curves contained in the same set).

We define the convexity radius for $(\tilde \Sigma, \tilde h)$ as in  \cite[Corollary 1.9.11]{K95}. All we need to know is that this function $r\colon \tilde{\Sigma}\to (0,\infty]$ is continuous and that in $B^{\tilde{h}}_{r(\tilde{x})}(\tilde{x})$  any two points  are connected by one and only one geodesic contained in the  ball, and this geodesic is  minimizing among all the connecting curves in $\tilde \Sigma$.

We claim that for all $\tilde x\in \tilde \Sigma$ and all $y,z$ in the same connected component of $B^{\tilde h}_{r(\tilde x)}(\tilde x)\cap \Sigma $ the unique minimal $\tilde{h}$-geodesic between $y$ and $z$ lies
in $\Sigma$.
This can be seen as follows. Let $\eta\colon [0,1]\to B^{\tilde h}_{r(\tilde x)}(\tilde x)\cap \Sigma$ be a curve from $y$ to $z$.
The set of parameters $t\in [0,1]$ for which the unique minimal $\tilde{h}$-geodesic between $y$ and $\eta(t)$ lies in $\Sigma$ is nonempty.
It is also open, since $\Sigma\subset \tilde{\Sigma}$ is open and $\exp^{\tilde{h}}_y$ is a  diffeomorphism defined on $\exp^{\tilde{h}}_y{}^{-1}[B^{\tilde h}_{r(\tilde x)}(\tilde x)]$. Finally it is closed by minimal pseudoconvexity.
Therefore the unique minimal
$\tilde{h}$-geodesic between $y$ and  $\eta(1)=z$ lies in $\Sigma$. This geodesic is also a minimal geodesic for $(\Sigma, h)$ as $\text{dist}^{\tilde{h}}\le \text{dist}^{{h}}$, and hence the unique minimal geodesic for $(\Sigma, h)$ connecting $y$ and $z$. By considering limits of geodesics we infer that,
  for every $x,y\in B^{\tilde h}_{r(\tilde x)}(\tilde x)$ belonging to the closure of the same connected component $C$ of  $  B^{\tilde h}_{r(\tilde x)}(\tilde x) \cap \Sigma $ there is a $\tilde h$-minimizing geodesic entirely contained in $\overline{C}\subset \overline \Sigma$ connecting them.

Observe that for every point $\overline{x}\in \overline{\Sigma}$ we have that every connected component of $B^{\tilde{h}}_{r(\overline{x})}(\overline{x})\cap
\Sigma$ is convex, and that the established properties provide a ``local convexity'' property for $\overline{\Sigma}$.

Now let $x,y\in\Sigma$ be given. We will show that there exists a minimal geodesic in $\Sigma$ connecting the two points. 

The following reasoning is similar to \cite[Satz 2.8(i)]{Ba78}. Let $\gamma_n\colon [0,1]\to \Sigma$ be a minimizing sequence of curves connecting $x$ with 
$y$. According to the argument above we can assume that the curves are geodesic polygons and that the length of the individual arcs is bounded from below by some 
positive constant. By the completeness of $(\tilde{\Sigma},\tilde{h})$ a subsequence converges to a geodesic polygon $\gamma\colon [0,1]\to \overline{\Sigma}$. Using
a standard argument involving the triangle inequality we see that $\gamma$ is a $\tilde{h}$-geodesic.

We claim that $\gamma$ does not intersect the boundary $\partial\Sigma:=\overline{\Sigma}\setminus \Sigma$. Assume otherwise, i.e. there exists $t\in (0,1)$ with $\gamma(t)\in \partial\Sigma$. W.l.o.g. we can assume that $t$ is minimal in $(0,1)$.
Denote with $\Sigma_t$ the connected component of $B^{\tilde{h}}_{r(\gamma(t))/4}(\gamma(t))\cap \Sigma$ which contains a segment $\gamma|_{[s,t)}$
for some $s<t$. Note that $\Sigma_t$ is convex with $\gamma(t)\in \partial \Sigma_t$. The following argument is an adaptation of \cite[Lemma 8.6]{CE75}
to the present situation. Choose an arbitrary small smooth
hypersurface $W\subset \Sigma_t$ transversal to $\gamma$ and containing $\gamma(s)$. Choose $u>t$ such that $\gamma|_{[t,u]}\subset
B^{\tilde{h}}_{r(\gamma(t))/4}(\gamma(t))$ and consider the set
\[
V:=\{\exp_{r}(\lambda w)|\; \tilde{h}(w,w)<r(\gamma(t))^2,\; \exp_{r}(w)\in W,\; \lambda \in (0,1)\}.
\]
Here $r\in \Sigma$ is a point sufficiently close to $\gamma(u)\in \overline{\Sigma}$ which stays in the connected component $\Sigma_t$ (it is possible to show that it exists perturbing $\gamma_{[s,u]}$).
The set $V$ is open and contained in ${\Sigma_t}$ by convexity. Further $V$ is an open neighborhood of $\gamma(t)$, (remember that $\gamma$ is a $\tilde h$-geodesic hence $C^1$) a contradiction to the
assumption that $\gamma(t)\in \partial\Sigma_t$. Therefore $\gamma$
is contained in $\Sigma$.
 The lower semicontinuity of the length functional implies that
the curve is minimal in $\Sigma$. Note that $\gamma$ is i.g. not minimal in $\tilde{\Sigma}$.
\end{proof}

\section{Pseudoconvexity in static spacetimes}\label{sec:stat}

Let $(\Sigma,h)$ be a Riemannian manifold. In this section we  investigate the relation of convexity and pseudoconvexity properties of $(\Sigma, h)$ to causality and pseudoconvexity properties of the static spacetime $(M,g)$ with
\begin{equation}\label{eq:static}
 M:=\mathbb{R} \times \Sigma, \quad\text{and}\quad g=-dt^2+h.
\end{equation}
The projection onto the first factor $t:M\to \mathbb{R}$  is a temporal function.


Observe that in semi-Riemannian product manifolds a path is, up to para\-me\-tri\-sation, a geodesic if and only if the projections onto the factors are geodesics. Hence in our case, geodesics are always of the form $\gamma=(\alpha, \sigma)$ where $\alpha$ is some linear function in $\mathbb{R}$ and $\beta$ is a geodesic in $\Sigma$. Moreover, clearly $\gamma$ is causal if and only if $|\alpha'| \geq \|\sigma'\|_h$.
We observe the following relation of minimizing and maximizing geodesics.
\begin{lemma}\label{lem:geodesiceq}
	A causal geodesic $\gamma= (\alpha, \sigma)$ in $M$ is maximizing if and only if $\sigma$ is minimizing in $\Sigma$.
\end{lemma}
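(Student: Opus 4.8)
The plan is to work with the product structure $g=-dt^2+h$ and the explicit form of causal geodesics $\gamma=(\alpha,\sigma)$ with $\alpha$ affine and $\sigma$ a geodesic in $\Sigma$, using the causality condition $|\alpha'|\ge\|\sigma'\|_h$. After an affine reparametrisation I may assume $\gamma$ is defined on $[0,1]$, $\alpha(s)=t_0+vs$ with $v\ge 0$, and $\sigma$ has constant speed $\|\sigma'\|_h=c\le v$. The key computational fact is the formula for the Lorentzian length of a causal geodesic in such a product: $L^g(\gamma)=\int_0^1\sqrt{(\alpha')^2-\|\sigma'\|_h^2}\,ds=\sqrt{v^2-c^2}$, while $L^h(\sigma)=c$. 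The strategy is then to compare these lengths against the respective distances and use the reverse triangle inequality in $M$ together with the triangle inequality in $\Sigma$.

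First I would prove the ``if'' direction. Suppose $\sigma$ is minimizing in $\Sigma$, so $c=L^h(\sigma)=d^h(x,y)$ where $x=\sigma(0)$, $y=\sigma(1)$. I want to show $L^g(\gamma)=d^g(p,q)$ with $p=(t_0,x)$, $q=(t_0+v,y)$. Take any causal curve $\mu=(\beta,\tau)$ from $p$ to $q$; causality forces $|\beta'|\ge\|\tau'\|_h$ pointwise, and since $t$ is a temporal function $\beta$ is monotone, so $\beta$ increases by exactly $v$ and the $h$-length of $\tau$ is at least $d^h(x,y)=c$. A Cauchy–Schwarz / reverse-triangle estimate — essentially $\int\sqrt{(\beta')^2-\|\tau'\|^2}\le\sqrt{(\int\beta')^2-(\int\|\tau'\|)^2}$, valid because $\beta'\ge 0$ — then gives $L^g(\mu)\le\sqrt{v^2-c^2}=L^g(\gamma)$, so $\gamma$ is maximizing. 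The one point needing care is justifying this integral inequality for merely Lipschitz causal curves (or reducing to broken geodesics / piecewise-$C^1$ curves, which is standard in Lorentzian length comparisons).

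For the ``only if'' direction I argue by contrapositive: if $\sigma$ is not minimizing, there is a shorter path in $\Sigma$ from $x$ to $y$, i.e.\ $d^h(x,y)=:c'<c$. Then the curve in $M$ which follows $t=t_0+vs$ in the time factor and an $h$-minimizing (or merely shorter) curve in $\Sigma$, reparametrised to have constant $h$-speed $c'\le c\le v$, is still causal and has $g$-length $\sqrt{v^2-(c')^2}>\sqrt{v^2-c^2}=L^g(\gamma)$, so $\gamma$ is not maximizing. Here one should check the degenerate case $c=v$ (the geodesic $\gamma$ is lightlike): then $L^g(\gamma)=0$; if $\sigma$ were not minimizing one builds a causal, indeed timelike, curve of positive length as above, again contradicting maximality, while conversely a lightlike $\gamma$ with $\sigma$ minimizing is maximizing since $d^g(p,q)=0$ (any causal curve from $p$ to $q$ must have $h$-projection of length $\ge c=v$ and hence be lightlike). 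I expect the main obstacle to be purely technical: handling the comparison curves at the level of Lipschitz/causal curves rather than geodesics, and cleanly disposing of the lightlike edge case; the geometric content is entirely captured by the reverse triangle inequality in the Lorentzian factor versus the triangle inequality in the Riemannian factor.
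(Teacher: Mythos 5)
Your proposal is correct and takes essentially the same route as the paper: both reduce to the product length formula $L^g(\gamma)=\sqrt{b^2-L^h(\sigma)^2}$, prove ``$\sigma$ not minimizing $\Rightarrow\gamma$ not maximizing'' by the identical competitor construction, and handle the other direction by a Cauchy--Schwarz-type estimate bounding $L^g$ of a causal competitor by $\sqrt{b^2-L^h(\text{projection})^2}$ (your integrated reverse triangle inequality is exactly what the paper's two applications of Cauchy--Schwarz yield, just phrased as a direct bound on all competitors rather than contrapositively). The technical points you flag --- Lipschitz competitors, the lightlike case, and the possible nonexistence of a minimizer in $\Sigma$ (where ``merely shorter'' suffices) --- are all handled correctly.
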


\begin{proof}
It suffices to consider geodesics $\gamma=(\alpha,\sigma)$ of the form $\alpha:[0,1]\to{\mathbb R}$, $\alpha(s)=bt$, and $\sigma:[0,1]\to\Sigma$, i.e., $\|\sigma'(s)\|_h=L^h(\sigma)$, connecting $p=(0,x)$ and $q=(b,y)$ with $b>0$. First observe that $\gamma$ is causal iff $b\geq L^h(\sigma)$ and so
\begin{equation}\label{eq:Lg}
 L^g(\gamma)=\sqrt{b^2-L^h(\sigma)^2}.
\end{equation}

Now suppose $\sigma$ is not minimizing, then there is $\tilde\sigma:[0,1]\to\Sigma$ connecting $x$ to $y$ with $L^h(\tilde\sigma)<L^h(\sigma)$. But then for $\tilde\gamma=(bt,\tilde\sigma)$ we find
$L^g(\tilde\gamma)=\sqrt{b^2-L^h(\tilde\sigma)^2}>\sqrt{b^2-L^h(\sigma)^2}= L^g(\gamma)$ and so $\gamma$ is not maximizing.

Conversely, suppose that $\gamma$ is not maximizing, then there is a future pointing timelike curve $\tilde \gamma:[0,1]\to M$, $\tilde\gamma=(bt,\tilde\sigma)$ with $L^g(\tilde\gamma)>L^g(\gamma)$.
Since $\tilde \gamma$ need not be a geodesic we can in general not parametrize it both linear in the first factor and constant speed in the second. However, we have
 \[
 L^g(\tilde{\gamma})=\int_0^{1} \sqrt{b^2-\|\tilde{\sigma}'\|_h^2}dt>L^g(\gamma)=\sqrt{b^2-(L^h(\sigma))^2}.
 \]
By applying  the Cauchy-Schwarz inequality to $\int_0^{1} \sqrt{b^2-\|\tilde{\sigma}'\|_h^2}dt$ and by using the previous inequality we get the  second inequality in the next expression
 \[
 L^h(\tilde{\sigma})^2\le \int_0^{1} \|\tilde{\sigma}'\|_h^2dt<L^h(\sigma)^2.
 \]
The first inequality is obtained through another standard application of the Cauchy-Schwarz inequality. In conclusion,  $\sigma$ is not minimizing between $x$ and $y$.
\end{proof}

\begin{lemma}\label{lem:max-min-pc}
If $M$ is maximally null pseudoconvex, then  $\Sigma$ is minimal pseudoconvex. Conversely, if $\Sigma$ is minimal pseudoconvex, then $M$ is maximally causal (hence null) pseudoconvex.
$M$ is pseudoconvex if and only if $\Sigma$ is pseudoconvex.
\end{lemma}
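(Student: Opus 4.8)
The plan is to translate each pseudoconvexity statement through the correspondence between geodesics $\gamma=(\alpha,\sigma)$ in $M$ and pairs consisting of an affine function $\alpha$ and a geodesic $\sigma$ in $\Sigma$, using Lemma \ref{lem:geodesiceq} to handle the maximizing/minimizing dictionary and the LGS characterisations from Lemmas \ref{lem:mpc} and \ref{lem:mnpc} where it is convenient. For the first implication I would argue contrapositively: suppose $\Sigma$ is not minimal pseudoconvex, so by Lemma \ref{lem:mpc} it fails the minimal LGS, and there are minimizing unit-speed geodesics $\sigma_n$ from $x_n\to x$ to $y_n\to y\neq x$ with no subsequence converging to a geodesic from $x$ to $y$. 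Lift these to null geodesics $\gamma_n=(\alpha_n,\sigma_n)$ in $M$ by choosing $\alpha_n$ linear with slope $\pm\|\sigma_n'\|_h$ so that $\gamma_n$ is lightlike; by Lemma \ref{lem:geodesiceq} each $\gamma_n$ is maximizing. Take compact sets $C\subset\Sigma$ (a neighbourhood of $x$ union a neighbourhood of $y$) and $[-T,T]\subset\mathbb{R}$ containing all endpoints $(\alpha_n(0),x_n)$ and $(\alpha_n(1),y_n)$ — the lengths $L^h(\sigma_n)$ are bounded because the $\sigma_n$ are minimizing and the endpoints stay in a bounded set, so the time components stay bounded. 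If $M$ were maximal null pseudoconvex the $\gamma_n$ would lie in a fixed compact $K'=[-T',T']\times C'$, forcing the $\sigma_n$ into the compact set $C'$; but then standard geodesic compactness (bounded length, bounded region, converging endpoints) would extract a subsequence of $\sigma_n$ converging to a geodesic from $x$ to $y$ — here one should note $x\neq y$ so the limit is nondegenerate — contradicting the failure of the minimal LGS. Hence $M$ not maximal null pseudoconvex gives $\Sigma$ not minimal pseudoconvex, which is the desired first statement.

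For the converse, assume $\Sigma$ is minimal pseudoconvex and let $K\subset M$ be compact; write $K\subset[-T,T]\times C$ with $C\subset\Sigma$ compact. Let $\gamma=(\alpha,\sigma)$ be a maximizing causal geodesic with both endpoints in $K$. By Lemma \ref{lem:geodesiceq}, $\sigma$ is a minimizing geodesic in $\Sigma$ with endpoints in $C$, so by minimal pseudoconvexity of $\Sigma$ the whole of $\sigma$ lies in a fixed compact $C'\subset\Sigma$. It remains to bound the time component: $\alpha$ is affine, its endpoint values lie in $[-T,T]$, hence $\alpha$ maps into $[-T,T]$, so $\gamma\subset[-T,T]\times C'=:K'$, a fixed compact set. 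This shows $M$ is maximal causal pseudoconvex, and since maximal causal pseudoconvexity implies maximal null pseudoconvexity the parenthetical ``hence null'' follows.

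For the final equivalence, the same bookkeeping applies with no maximizing/minimizing restriction. If $\Sigma$ is pseudoconvex and $K\subset[-T,T]\times C$ is compact, any geodesic $\gamma=(\alpha,\sigma)$ with endpoints in $K$ has $\sigma$ a geodesic in $\Sigma$ with endpoints in $C$, hence $\sigma\subset C'$ for a fixed compact $C'$; affineness of $\alpha$ confines the time component to $[-T,T]$, so $\gamma\subset[-T,T]\times C'$ and $M$ is pseudoconvex. Conversely, if $M$ is pseudoconvex and $C\subset\Sigma$ is compact, let $\sigma$ be any geodesic in $\Sigma$ with endpoints in $C$; then $\gamma=(0,\sigma)$ (constant time component, i.e.\ slope zero) is a geodesic in $M$ with endpoints in the compact set $\{0\}\times C$, so $\gamma$ lies in a fixed compact $K'$, whence $\sigma$ lies in the compact set $\pi_\Sigma(K')$, where $\pi_\Sigma$ is projection onto the second factor. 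Thus $\Sigma$ is pseudoconvex.

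I expect the only genuinely delicate point to be the extraction of a convergent subsequence of the $\sigma_n$ in the first part: one must confirm that confinement to a compact region $C'$ together with a uniform upper length bound yields, after passing to a subsequence and reparametrising by a common interval, a geodesic limit between $x$ and $y$, and that this limit is a nondegenerate geodesic rather than a constant — this is exactly where $q\neq p$ (here $x\neq y$) is used, and it is the same mechanism already invoked in the proof of Lemma \ref{lem:mpc}. Everything else is a matter of writing geodesics in product form and observing that affine functions on a compact parameter interval with endpoint values in $[-T,T]$ stay in $[-T,T]$.
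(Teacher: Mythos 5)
Your argument is correct, and its second and third parts (minimal pseudoconvexity of $\Sigma$ $\Rightarrow$ maximal causal pseudoconvexity of $M$, and the pseudoconvexity equivalence via the lift $(0,\sigma)$ and the affine confinement of the time component) coincide with the paper's proof. Where you diverge is the first implication. The paper proves it directly and without any limit argument: given a compact $C\subseteq\Sigma$, every minimizing unit-speed geodesic $\sigma$ with endpoints in $C$ has $L^h(\sigma)\le \mathrm{diam}^h(C)=:c$, so its lift $t\mapsto(t,\sigma(t))$ is a maximizing null geodesic (Lemma \ref{lem:geodesiceq}) with endpoints in the single compact set $K=[0,c]\times C$; maximal null pseudoconvexity yields one compact $K'\subseteq[0,c]\times C'$, and projecting gives $\sigma\subseteq C'$ — which is exactly minimal pseudoconvexity of $\Sigma$. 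You instead argue contrapositively through Lemma \ref{lem:mpc}: negate minimal pseudoconvexity, pass to a failure of the minimal LGS, lift that bad sequence, confine it using maximal null pseudoconvexity of $M$, and then rerun the compactness extraction to reach a contradiction. This is sound (the extraction you flag as delicate — uniform length bound, confinement to $C'$, convergence of initial vectors, and the escape argument guaranteeing the limit geodesic is defined up to parameter $b$ with $\sigma(b)=y$ — is precisely the second half of the proof of Lemma \ref{lem:mpc}, and $x\neq y$ comes for free from the negation of the LGS), but it duplicates machinery already packaged in Lemma \ref{lem:mpc}: once you know the $\sigma_n$ lie in a fixed compact set you are essentially re-proving the implication ``minimal pseudoconvex $\Rightarrow$ minimal LGS''. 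The paper's route buys brevity and avoids the sequence/limit step entirely; yours buys nothing extra here, though it is a legitimate alternative and makes explicit how the LGS characterizations mediate between the Riemannian and Lorentzian statements.
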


\begin{proof}
Let $\Sigma$ be pseudoconvex and take any $K\subseteq M$. Then the projections of $K$ onto $\mathbb{R}$ and $\Sigma$ are compact, i.e.\ we have $K \subseteq [c,d] \times C$ for $c,d\in \mathbb{R}$ and some compact $C\subseteq \Sigma$. By pseudoconvexity of $\Sigma$ there exists
a compact set $C'$ such that any geodesic starting and ending in $C$ is contained in $C'$. Now let $\gamma$ be a geodesic in $M$, starting and ending in $K$. We can write $\gamma(t)=(a+bt, \sigma(t))$, hence $c\leq a, a+b\leq d$ with $\sigma$ a geodesic starting and ending in $C$. This however implies that $\gamma \subseteq K':= [c,d]\times C'$.

The proof that ``$\Sigma$ is minimally pseudoconvex'' implies ``$M$ is maximally causal pseudoconvex'' follows from a very similar argument which makes use  of Lemma \ref{lem:geodesiceq}.
\medskip

For the converse direction, we first look at the maximal-minimal case:
Let $M$ be maximally null pseudoconvex and let $C\subseteq\Sigma$ be compact. Any minimizing unit speed geodesic $\sigma$ starting and ending in $C$ fulfills $L^h(\sigma) \leq c:=\text{diam}^h(C)$. For the compact set $K:=[0,c]\times C$ by maximal null pseudoconvexity there exists some compact set $K'$ containing all maximal null geodesics starting and ending in $K$. Again by construction we must have $K' \subseteq [0,c] \times C'$ for some compact set $C'$ in $\Sigma$.
By Lemma \ref{lem:geodesiceq} the null geodesic $\gamma(t):=(t,\sigma(t))$ is maximizing and hence contained in $K'$, but then also $\sigma \subseteq C'$.

If $M$ is pseudoconvex then $\Sigma$ is too by noticing that for any compact set $C\subseteq \Sigma$ the set $\{0\} \times C$ is compact in $M$.
\end{proof}

\begin{remark}
While pseudoconvexity of $\Sigma$ clearly implies causal and null pseudoconvexity of $M$ the converse does not hold. This is due to the fact that any causal geodesic starting and ending in a compact set in $M$ has a projection with a  length bounded by the difference of the $t$ components of the causal curve endpoints. So one cannot expect to gain control over sequences in $\Sigma$ with unbounded lengths.
\end{remark}

\begin{lemma}\label{lem:conv}
 $\Sigma$ is convex if and only if $M$ is causally simple.
\end{lemma}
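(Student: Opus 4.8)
The plan is to establish both implications by working directly with the product structure $M = \mathbb{R} \times \Sigma$, $g = -dt^2 + h$, and using Lemma \ref{lem:geodesiceq} to translate between maximizing causal geodesics in $M$ and minimizing geodesics in $\Sigma$. Recall that a causal spacetime is causally simple iff $J$ is closed, and since a static spacetime with temporal function $t$ is automatically stably causal, the ``causal'' part is free; the content is the closedness of $J$. For the explicit description, note that $(t_1, x_1) \in J^+((t_0,x_0))$ if and only if $t_1 - t_0 \geq d^h(x_0, x_1)$, because a causal curve from $(t_0,x_0)$ to $(t_1,x_1)$ projects to a path in $\Sigma$ of $h$-length at most $t_1 - t_0$, and conversely any $h$-path of length $\ell \leq t_1 - t_0$ can be lifted to a causal curve by letting $t$ increase appropriately.

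First I would prove ``$\Sigma$ convex $\Rightarrow$ $M$ causally simple.'' Suppose $(t_0^n, x_0^n) \to (t_0, x_0)$ and $(t_1^n, x_1^n) \to (t_1, x_1)$ with $(t_1^n, x_1^n) \in J^+((t_0^n, x_0^n))$, i.e.\ $t_1^n - t_0^n \geq d^h(x_0^n, x_1^n)$. If $x_0 = x_1$ then by continuity of $t$ we get $t_1 \geq t_0$, hence $(t_1,x_1) \in J^+((t_0,x_0))$ trivially (a vertical causal curve works, or $t_1 = t_0$ gives equality of points). If $x_0 \neq x_1$, then $d^h$ is finite and continuous wherever it is finite; but I actually need continuity of $d^h(x_0^n, x_1^n) \to d^h(x_0, x_1)$, which holds because the distance function of any Riemannian manifold is continuous (jointly) on $\Sigma \times \Sigma$ — convexity is not even needed here. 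Taking limits in $t_1^n - t_0^n \geq d^h(x_0^n, x_1^n)$ gives $t_1 - t_0 \geq d^h(x_0, x_1)$, so $(t_1, x_1) \in J^+((t_0, x_0))$. Hence $J$ is closed. (In fact this direction shows any static spacetime of product type over a Riemannian manifold is causally simple, so convexity is not used — but I should double-check whether the subtlety is that $\Sigma$ might be incomplete and $d^h$ could fail to be realized; continuity of $d^h$ still holds, so the argument stands. I would state it cleanly and note convexity is unnecessary for this direction.)

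The substantive direction is ``$M$ causally simple $\Rightarrow$ $\Sigma$ convex.'' Given $x, y \in \Sigma$, I want a minimizing $h$-geodesic between them. Set $\ell := d^h(x,y) < \infty$ (finite since $\Sigma$ is connected). Consider $p := (0, x)$ and $q := (\ell, y)$. Since $d^h(x,y) = \ell \leq \ell - 0$, we have $q \in J^+(p)$. Now here is where I would use causal simplicity together with the fact (Theorem \ref{thm:beem92}, or rather Lemma \ref{lem:mnpc}) that causal simplicity implies the maximal null LGS, but more to the point: I want to produce an actual causal geodesic realizing $q \in J^+(p)$. The cleanest route: take a sequence $\sigma_n : [0,1] \to \Sigma$ of unit-speed (reparametrized) paths from $x$ to $y$ with $L^h(\sigma_n) \to \ell$; lift each to $\gamma_n(s) = (t_n(s), \sigma_n(s))$ a causal curve from $(0,x)$ to $(\ell_n, y)$ where $\ell_n = L^h(\sigma_n) \to \ell$ (take $t_n(s) = $ arclength of $\sigma_n$ up to $s$, giving a null curve). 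So $(\ell_n, y) \in J^+((0,x))$, and as $\ell_n \to \ell$, $(\ell_n, y) \to (\ell, y) = q$. Actually I realize the right tool is the maximal null LGS: the curves $\gamma_n$ can be taken to be the maximizing null geodesics from $(0,x)$ to whichever endpoint — hmm, but $\sigma_n$ need not be geodesics. Let me instead argue: since $q \in J^+(p)$ and $q \notin I^+(p)$ (because $\ell = d^h(x,y)$ is exactly the threshold, so no timelike curve connects them — any causal curve from $p$ to $q$ has projection of length $\geq d^h(x,y) = \ell$ and $t$-increment exactly $\ell$, forcing the projection to have length exactly $\ell$ and the lift to be null), we have $(p,q) \in E = J \setminus I$. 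By causal simplicity and the push-up / limit curve machinery (as in the proof of Theorem \ref{thm:beem92}), $J$ closed plus $(p,q) \in E$ yields a maximizing null geodesic $\gamma$ from $p$ to $q$. Wait — that is not immediate from $E$ alone; I'd need that achronal causal curves are geodesics, which requires the connecting causal curve to be a null geodesic. The clean fact: in a causally simple spacetime, $(p,q) \in E$ and $p \neq q$ implies there is a null geodesic from $p$ to $q$ that is maximizing — this is standard (e.g.\ via limit curves and the fact that $E$ relation in causally simple spacetimes is realized by null geodesics; cf.\ \cite{minguzzi18b}). Granting this, write $\gamma = (\alpha, \sigma)$; then $\sigma$ is a geodesic in $\Sigma$ by the product structure, and by Lemma \ref{lem:geodesiceq}, since $\gamma$ is maximizing, $\sigma$ is minimizing between $x$ and $y$. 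Thus $\Sigma$ is convex.

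The main obstacle I anticipate is making the last step rigorous: extracting an honest maximizing null \emph{geodesic} from the mere fact $(p,q) \in E$ in a causally simple spacetime. One must rule out that the causal curve from $p$ to $q$ is a ``broken'' null curve with no geodesic realizer. The resolution is the limit curve theorem: take the lifted near-null curves $\gamma_n$ above (from $p$ to points $q_n \to q$, each $\gamma_n$ null but not necessarily geodesic), and apply the limit curve theorem — because $J$ is closed one controls the limit, obtaining an inextendible-or-endpoint limit causal curve from $p$ to $q$ which, being achronal (as $(p,q)\in E$ and the limit can't be timelike), must be a null geodesic. I would also need to confirm the projection $\sigma$ of the limit is nondegenerate (doesn't collapse to a point), which holds since if $x \neq y$; the case $x = y$ is trivial (the constant geodesic). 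I'll present the argument invoking \cite[Thm.\ 2.53]{minguzzi18b}-style limit curve results and Lemma \ref{lem:geodesiceq}, keeping the product-structure computations ($t$-increment $=$ $h$-length for null curves, hence $\leq$ for causal) explicit but brief.
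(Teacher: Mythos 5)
Your second direction ($M$ causally simple $\Rightarrow$ $\Sigma$ convex) is essentially the paper's argument: lift nearly minimizing curves to null curves from $(0,x)$ to $(\ell_n,y)$, use closedness of $J$ to get $(\ell,y)\in J^+((0,x))$, and extract a minimizer from a connecting causal curve. Your extraction step is more roundabout than necessary, though: once you have \emph{any} causal curve $\alpha=(t,\beta)$ from $(0,x)$ to $(\ell,y)$, the causality condition gives $\|\beta'\|_h\le 1$, hence $L^h(\beta)\le \ell=d^h(x,y)$, so $\beta$ is already a minimizing curve (hence a pregeodesic); there is no need to pass through $E=J\setminus I$, the limit curve theorem, or Lemma \ref{lem:geodesiceq}.

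The first direction, however, contains a genuine error. Your claimed characterization ``$(t_1,x_1)\in J^+((t_0,x_0))$ iff $t_1-t_0\ge d^h(x_0,x_1)$'' is false when $\Sigma$ is not convex: in the borderline case $t_1-t_0=d^h(x_0,x_1)$ a connecting causal curve must project to a curve of $h$-length exactly $d^h(x_0,x_1)$, i.e.\ to a minimizing geodesic, and such a curve need not exist. (Take $\Sigma=\mathbb{R}^2\setminus\{0\}$, $x=(-1,0)$, $y=(1,0)$: then $(2,y)\in \overline{J^+((0,x))}\setminus J^+((0,x))$, so $J$ is not closed.) Consequently your parenthetical conclusion that ``any static spacetime of product type over a Riemannian manifold is causally simple, so convexity is not used'' cannot be right -- it would contradict the ``only if'' half of this very lemma and would destroy the paper's counterexample in Section \ref{sec:r-counter}, whose whole point is a non-convex $\Sigma$ giving a non-causally-simple $M$. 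Continuity of $d^h$ only yields the limit inequality $t_1-t_0\ge d^h(x_0,x_1)$; to convert that inequality back into an actual causal curve in the equality case you must, as the paper does, invoke convexity to obtain a unit-speed minimizing geodesic $\sigma:[0,d^h(x_0,x_1)]\to\Sigma$ and lift it to the causal curve $t\mapsto\bigl(t_0+\tfrac{t_1-t_0}{d^h(x_0,x_1)}\,t,\ \sigma(t)\bigr)$. With that repair the first direction goes through; without it the proof of closedness of $J$ fails exactly where the lemma's hypothesis is needed.
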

\begin{proof}

Assume $\Sigma$ is convex. Observe that $(M,g)$ is causal as $t$ is a time function. By translational invariance over the time fiber and by time reflection symmetry, we need only to show that  $J^+(p)$ is closed for any point of the form $p=(0,x)$.

Let $q_n=(b_n,y_n) \in J^+(p)$ with $q_n=(b_n, y_n) \to (b,y)=q$. There exist causal paths $\sigma_n$ from $p$ to $q_n$ of the form $(\frac{b_n}{d^h(x,y_n)}\,t,\sigma_n(t))$ with $\sigma_n$ from $x$ to $y_n$.
The causality condition reads $\|\sigma_n'\|_h \le b_n/d^h(x,y_n)$.
Hence $ d^h(x,y_n) \leq L^h(\sigma_n) = \int_0^{d^h(x,y_n)} \|\sigma_n'\|_h \leq b_n$, which by continuity gives $d^h(x,y) \leq b$.

By convexity there exists a unit speed geodesic $\sigma:[0,d^h(x,y)] \to \Sigma$ connecting $x$ to $y$. Now $\gamma:[0,d^h(x,y)] \to M, \gamma(t):=(\frac{b}{d^h(x,y)}\, t, \sigma(t))$ is a path from $p$ to $q$ that is causal iff, $d^h(x,y) \leq b$, which is the case as we just proved. Thus $q\in J^+(p)$ and, by the arbitrariness of $q$, $J^+(p)$ is closed.
\medskip

Conversely, let $M$ be causally simple and let $x,y\in \Sigma$. There exist curves $\sigma_n$ from $x$ to $y$ with $L^h(\sigma_n)=:l_n \searrow l:=d^h(x,y)$ with $\|\sigma_n'\|_h=1$. The curves $\gamma_n:[0,l_n]\to M, \gamma_n(t):=(t,\sigma_n(t))$ from $(0,x)$ to $(l_n,y)$ are null and hence $(l,y)\in J^+((0,x))$ by causal simplicity. So there exists some causal curve $\alpha:[0,l]\to M, \alpha(t)=(t,\beta(t))$ with $\beta$ a path in $\Sigma$ from $x$ to $y$.
	Moreover $0 \geq g(\alpha', \alpha')= -1 + \|\beta'\|_h^2$ and so $ \|\beta'\|_h^2\leq 1$, which leads to $L^h(\beta) \leq l$ and hence $\beta$ must be a minimizing geodesic from $x$ to $y$.
\end{proof}

\section{The counterexamples}\label{sec:r-counter}

With Thm.\ \ref{man} we learned that on a Riemannian manifold convexity implies the minimal LGS.
In this section we show that the converse does not hold.

Then we shall consider the Lorentzian Thm.\ \ref{thm:beem92}, analogous to Thm.\ \ref{man}, according to which causal simplicity implies the maximal null LGS. By using the results of the previous section on static spacetimes, we shall show, once again, that the converse implication does not hold.

Let us give an example of Riemannian space $(\Sigma,h)$ which satisfies the minimal LGS but is not convex.
A very similar Riemannian space was constructed in \cite[Sec.\ 2.1]{bartolo02b} as a counterexample to another statement also related to the notions of convexity and connectedness in Riemannian spaces.

\begin{figure}[ht]
\centering
\includegraphics[width=.75\textwidth]{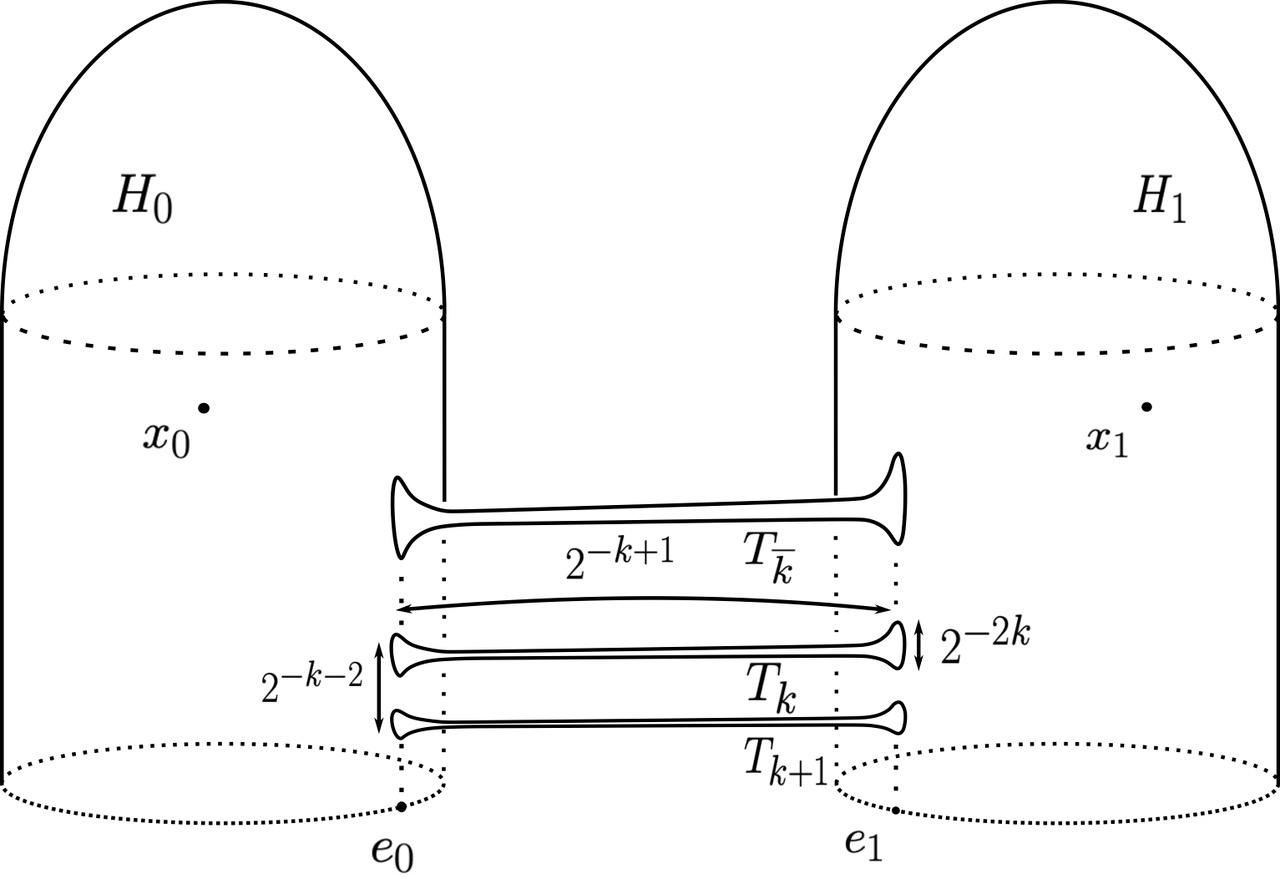}
\caption{The counterexample consists of two cylinders $H_0$ and $H_1$, closed above with cups and connected by a sequence of immersed tubes $T_k$, $k \ge \bar k$. This figure is inspired by that in  \cite[Sec.\ 2.1]{bartolo02b}.}
\end{figure}

\begin{example}[A non-convex Riemannian manifold possessing the minimal LGS]\label{ex:r}
 The space $\Sigma$ consists of two  cylinders $H_0$ and $H_1$ closed above with cups and then  connected in the flat region by a sequence of immersed tubes $T_k$, $k\ge \bar k>0$.  The Cauchy boundary of the space can be identified with the union of two circles, which are the lower boundaries of the cylinders  used in the construction.
The mouths of the tubes converge to points $e_0$ and $e_1$ on the Cauchy boundary of the space.

Let the geodesic distance between the center of the mouths of $T_k$ and $T_{k+1}$ in  $H_0$ before the  discs (tube mouths) are excised,  be $\frac{1}{2^{k+2}}$, let the diameter of the mouths (discs) in the same geometry be $\frac{1}{4^{k}}$ (basically is goes so fast to zero that in our arguments this distance becomes  negligible),   and let the length of the tube $T_k$ from mouth to mouth in $(\Sigma, h)$ be $\frac{1}{2^{k-1}}$. Let analogous conditions hold in $H_1$. If $\gamma_1$ is a curve passing through $T_k$ from mouth to mouth (here $\gamma_1$ might have endpoints belonging to some tubes that might coincide with $T_k$ or not) then we can find a curve $\gamma_2$ connecting the same endpoints of $\gamma_1$ passing through $T_{k+1}$ from mouth to mouth and such that
\[
\ell(\gamma_2)\le  \ell(\gamma_1) +2 \frac{1}{2^{k+2}}- \frac{1}{2^{k-1}}   +\frac{1}{2^{k}} +O(\frac{1}{4^k}) \le \ell(\gamma_1) -\frac{1}{2^{k+1}}  +O(\frac{1}{4^k}) .
\]
This implies  that, provided the space is defined with  $\bar k$ sufficiently large,  $\ell(\gamma_2) <\ell(\gamma_1)$, hence
 a minimizing geodesic cannot pass through a tube from mouth to mouth and so that there are certainly pairs of points not connected by a minimizing geodesic, for instance any pair $(x_0, x_1)$ where $x_0$ belongs to the first cylinder and $x_1$ belongs to the second cylinder. As a consequence, convexity does not hold (this fact was already pointed out in \cite{bartolo02b}).

We want to show that the minimal LGS holds.
Let $\sigma_n\colon [0,a_n]\to \Sigma$ be minimizing  unit speed geodesics such that $p_n:=\sigma_n(0)\to p$ and $q_n:=\sigma_n(a_n)\to q$, $p,q\in \Sigma$. Without loss of generality we can assume that the tangents $\dot \sigma_n$ converge to some unit vector $u\in T_p \Sigma$, and since the diameter of  $(\Sigma, h)$ is bounded we can also assume that $a_n\to a$.
Let $\sigma\colon I\to \Sigma$ be the geodesic  that starts from $p$ with tangent $u$. If its domain interval includes $[0,a]$,
then by the continuity of the exponential map we can conclude that
\[
q_n=\exp_{p_n} ( \dot \sigma_n(0) a_n) \to \exp_{p} ( u a) = \sigma(a)
\]
that is $q=\sigma(a)$. Observe that $a_n$ is the length of $\sigma_n$ and it converges to the length $a$ of $\sigma$. Thus $\sigma$ must be minimizing otherwise for sufficiently large $n$, $\sigma_n$ would not be minimizing.
This would give the desired
result so we have only to show that the domain of $\sigma$ includes $[0,a]$.

Suppose not. The maximal domain of $\sigma$ is $[0,b)$ for some $b\le a$, and by standard ODE theory \cite{hartman02} $\sigma(t)$ escapes every compact set as $t\to a$ (though possibly returning indefinitely to it).

Let $T^p=T_s$ be the tube whose mouth is closest to $p$ (which could be the tube to which $p$ belongs), and let $T^q=T_t$ be the tube whose mouth is closest to $q$ (if there are more choices for the closest tube we choose that with larger $s$, resp.\ $t$). For sufficiently large $k$ namely for  $k>K> \max(s,t)$, with suitable $K>0$ the minimizing geodesics $\sigma_n$ cannot transverse $T_k$  from mouth to mouth (as they are minimizing, cf.\ the above argument).

Let us section the figure by cutting with a horizontal plane at a height $y>0$ selected so that above it we have  the points $p$, $q$, and the tubes $T_k$, $k\le K$, thus including $T^p$ and $T^q$. Then the geodesics $\sigma_n$  stay entirely above the horizontal section (because if they cross the plane then they can be replaced by a shorter curve running on the plane for a geodesic segment, a fact which would contradict the minimizing property). As a consequence $\sigma$ being a limit of $\sigma_n$ is also entirely above the plane and so it cannot escape every compact set of $\Sigma$. The contradiction proves that the domain of $\sigma$ is $[0,a]$ and so that the minimal LGS holds.
\end{example}

We are ready to give the Lorentzian example.

\begin{example}[A maximally causal pseudoconvex and causally continuous but non-causally simple spacetime] We consider the static spacetime $(M,g)$ with $M={\mathbb R}\times\Sigma$ and $g=-dt^2+h$, where $(\Sigma,h)$ is as in Example \ref{ex:r}.
By construction, $M$ is strongly causal (actually stably causal, because of the time function $t$). Due to the timelike Killing vector $\p_t$ it is reflective and hence causally continuous \cite{minguzzi18b}.

Now, since $\Sigma$ possesses the minimal LGS, it is also minimal pseudoconvex by Lemma \ref{lem:mpc} and hence $M$ is maximal causal (hence null) pseudoconvex by Lemma \ref{lem:max-min-pc}.

On the other hand $\Sigma$ is not convex and so,  by Lemma \ref{lem:conv}, $M$ is not causally simple.
\end{example}

\section*{Acknowledgements}

B.S. is supported by the Uni:Docs program of the University of Vienna.
R.S. is supported by FWF-grants P28770 and P33594.
E.M. is partially supported by GNFM of INDAM.
J.H. and S.S. are partially supported by the SFB/TRR 191 ``Symplectic Structures in Geometry, Algebra and Dynamics'', funded by the Deutsche Forschungsgemeinschaft.
Part of this work has been developed while S.S. was a visiting professor at Politecnico di Bari; he thanks Politecnico di Bari and the Department of Mechanics, Mathematics and Management for the support.

\end{document}